\def\d{\delta}
\def\eps{\ve}
\renewcommand{\epsilon}{\ve}
\def\ve{\varepsilon}
\newcommand{\E}{\mbox{\bf E}}
\newcommand{\pr}[2][]{\mbox{Pr}\ifthenelse{\not\equal{}{#1}}{_{#1}}{}\!\left[#2\right]}
\newcommand{\dtv}{d_{\mathrm {TV}}}
\newcommand{\dk}{d_{\mathrm K}}
\newtheorem{theorem}{Theorem}
\newtheorem{proposition}{Proposition}
\newtheorem{fact}{Fact}
\newtheorem{lemma}{Lemma}
\newtheorem{corollary}{Corollary}
\newtheorem{definition}{Definition}
\newtheorem{question}{Question}
\newcommand{\ignore}[1]{}
\providecommand{\poly}{\operatorname*{poly}}
\newenvironment{prevproof}[2]{\noindent {\em {Proof of {#1}~\ref{#2}:}}}{$\hfill\qed$\vskip \belowdisplayskip}
\newcommand{\bg}[1]{\medskip\noindent{\bf #1}}
\definecolor{Red}{rgb}{1,0,0}
\def\red{\color{Red}}
\newcommand{\oldbound}[1]{{}}
\newcommand{\SAMP}{{\sf SAMP}\xspace}
\newcommand{\Bin}{B}
\newcommand{\NACOND}{{\sf NACOND}\xspace}
\newcommand{\COND}{{\sf COND}\xspace}
\newcommand{\EVAL}{{\sf EVAL}\xspace}
\newcommand{\ANACONDA}{\textsc{Anaconda}}
\title{\ANACONDA: A Non-Adaptive Conditional Sampling Algorithm for Distribution Testing}
\author {
Gautam Kamath\thanks{Supported by NSF Award CCF-1551875, CCF-1617730, CCF-1650733, CCF-1741137, ONR N00014-12-1-0999, a Google Faculty Research Award, and a Simons Investigator award. This work was done primarily while the author was an student at MIT, and partially while he was an intern at Microsoft Research, New England.} \\
Simons Institute for the Theory of Computing\\
\tt{g@csail.mit.edu}
\and
Christos Tzamos\thanks{This work was done while the author was a postdoc at Microsoft Research, New England.} \\
University of Wisconsin-Madison \\
\tt{tzamos@wisc.edu}
}
\begin{document}
\maketitle
\begin{abstract}
We investigate distribution testing with access to non-adaptive conditional samples.
In the conditional sampling model, the algorithm is given the following access to a distribution: it submits a query set $S$ to an oracle, which returns a sample from the distribution conditioned on being from $S$.
In the non-adaptive setting, all query sets must be specified in advance of viewing the outcomes.

Our main result is the first polylogarithmic-query algorithm for equivalence testing, deciding whether two unknown distributions are equal to or far from each other.
This is an exponential improvement over the previous best upper bound, and demonstrates that the complexity of the problem in this model is intermediate to the the complexity of the problem in the standard sampling model and the adaptive conditional sampling model.
We also significantly improve the sample complexity for the easier problems of uniformity and identity testing.
For the former, our algorithm requires only $\tilde O(\log n)$ queries, matching the information-theoretic lower bound up to a $O(\log \log n)$-factor.

Our algorithm works by reducing the problem from $\ell_1$-testing to $\ell_\infty$-testing, which enjoys a much cheaper sample complexity.
Necessitated by the limited power of the non-adaptive model, our algorithm is very simple to state.
However, there are significant challenges in the analysis, due to the complex structure of how two arbitrary distributions may differ.
\end{abstract}

\newpage
\section{Introduction}
\label{sec:intro}
Statistical hypothesis testing is one of the most classical problems in statistics, with a rich history over the past century.
Over the last two decades, the problem has recently attracted the focus of theoretical computer scientists, primarily with a focus on rigorous, finite sample guarantees for distributions with large domain sizes.
The seminal works of Goldreich, Goldwasser, and Ron~\cite{GoldreichGR96}, Goldreich and Ron~\cite{GoldreichR00}, and Batu, Fortnow, Rubinfeld, Smith, and White~\cite{BatuFRSW00} initiated this study of distribution testing, viewing distributions as a natural domain for property testing (see~\cite{Goldreich17} for coverage of this much broader field).

Since these works, distribution testing has enjoyed a wealth of study, resulting in a thorough understanding of the complexity of testing many distributional properties (see e.g.~\cite{BatuFFKRW01, BatuKR04, Paninski08, AcharyaDJOP11, BhattacharyyaFRV11, Valiant11, Rubinfeld12, IndykLR12, DaskalakisDSVV13, ChanDVV14, ValiantV17a, Waggoner15, AcharyaD15, AcharyaDK15, BhattacharyaV15, DiakonikolasKN15a, DiakonikolasK16, Canonne16, BlaisCG17, BatuC17, DaskalakisDK18, DaskalakisKW18, DiakonikolasGPP18}, and~\cite{Rubinfeld12,Canonne15a, BalakrishnanW17b, Kamath18} for recent surveys).
For many problems, these works have culminated in sample-optimal algorithms.

In this paper, we will be concerned with the following three problems, defined on discrete distributions over $[n]$:
\begin{itemize}
\item {\bf Uniformity Testing:} Given sample access to a distribution $p$, test whether $p = \mathcal{U}_n$ (the uniform distribution on $[n]$) or is far from it;
\item {\bf Identity Testing:} Given sample access to a distribution $p$ and the description of a distribution $q$, test whether $p = q$ or is far from it;
\item {\bf Equivalence Testing:} Given sample access to distributions $p$ and $q$, test whether they are equal to or far from each other.
\end{itemize}
Observe that each of these problems generalizes the previous, and thus are in increasing difficulty.
All three of these problems have a sample complexity which is either $\Theta(n^{1/2})$ or $\Theta(n^{2/3})$.
In other words, while these problems enjoy a sample complexity which is strongly sublinear in the domain size, in the absence of additional assumptions, information-theoretic lower bounds often necessitate a sample complexity which is polynomial in the size of the domain.
When the domain is exceptionally large, this cost may be prohibitive for many of the inference tasks we wish to perform.

To circumvent these strong lower bounds, one may imagine oracle models where one has additional power when interacting with the distribution.
Some examples include when the algorithm may query the PDF or CDF of the distribution~\cite{BatuDKR05, GuhaMV06, RubinfeldS09, CanonneR14}, or is given probability-revealing sample~\cite{OnakS18}.
However, perhaps the most popular alternative model, and the one we consider in this paper, is the \emph{conditional sampling} model.
This model was recently introduced concurrently by Chakraborty, Fischer, Goldhirsh, and Matsliah~\cite{ChakrabortyFGM13, ChakrabortyFGM16} and Canonne, Ron, and Servedio~\cite{CanonneRS14, CanonneRS15}.
The algorithm is able to \emph{query} a distribution in the following way: it submits a query set $S$ to an oracle, which returns a sample from the distribution conditioned on being from $S$.
Additionally, we will distinguish between conditional sampling models where the algorithm's queries may be adaptive (\COND) or non-adaptive (\NACOND).\footnote{A formal definition of these concepts is given in Definition~\ref{def:cond}.}
In comparison, we will use \SAMP to refer to the standard sampling model.

Conditional sampling often dramatically reduces the complexity of distribution testing problems.
For example, given \SAMP access to a distribution, the sample complexity of identity testing is $\Theta(\sqrt{n}/\ve^2)$~\cite{Paninski08,ValiantV17a}.
However, given \COND access, the query complexity drops to $\tilde \Theta(1/\ve^2)$~\cite{FalahatgarJOPS15}, completely removing the dependence on the support size.
Motivated by the power of this model, there has been significant investigation into its implications on distribution testing~\cite{Canonne15b, FalahatgarJOPS15, AcharyaCK15b, FischerLV17, SardharwallaSJ17, BlaisCG17, BhattacharyyaC18}, as well as group testing~\cite{AcharyaCK15a}, sublinear algorithms~\cite{GouleakisTZ17}, and crowdsourcing~\cite{GouleakisTZ18}.

At this point, we have a developed understanding of the power of the \COND oracle with respect to the aforementioned distribution testing problems.
Perhaps surprisingly, the relative complexities of certain problems have qualitatively different relationships between \SAMP and \COND.
To be precise, the sample complexities of identity testing and equivalence testing in \SAMP are $\Theta(n^{1/2})$~\cite{Paninski08, ValiantV17a} and $\Theta(n^{2/3})$~\cite{ChanDVV14} respectively: there is a polynomial relationship between the two.
However, their query complexities in \COND are $\Theta(1)$~\cite{CanonneRS15,FalahatgarJOPS15} and $\log^{\Theta(1)}\log n$~\cite{FalahatgarJOPS15,AcharyaCK15b} respectively: there is a ``chasm'' between the two complexities, as we go from no dependence on the domain size to a doubly logarithmic one.

However, the picture is much less clear when it comes to the non-adaptive \NACOND model.
We know that the complexity of identity testing is $\poly \log n$~\cite{ChakrabortyFGM13,AcharyaCK15b}, though the upper and lower bounds are quite far from each other.
On the other hand, the complexity of equivalence testing is far less clear: the best lower bound is $\Omega(\log n)$~\cite{AcharyaCK15b}, and the best upper bound is $O(n^{2/3})$~\cite{ChanDVV14}.
Given the interesting qualitative behavior observed for the \COND model, this begs the following question:
\begin{question}
What is the relationship of the query complexities of identity and equivalence testing in the \NACOND model?
\end{question}
In particular, are they polynomially related, as in the \SAMP model?
Or is there a larger gap between the two, as in the \COND model?
Stated another way, do we require both conditional samples and adaptivity \emph{simultaneously} in order to reap the benefits for testing equivalence?

\subsection{Results and Discussion}

Our main result is a qualitative resolution to this problem: we give a $\poly \log n$-query algorithm for equivalence testing.
\begin{theorem}[Non-Adaptive Equivalence Testing]
\label{thm:equivalence}
There exists an algorithm which, given \NACOND access to unknown distributions $p, q$ on $[n]$, makes $\tilde O\left(\frac{\log^{12} n}{\ve^2}\right)$ queries to the oracle on each distribution and distinguishes between the cases $p = q$ and $\dtv(p,q) \geq \ve$ with probability at least $2/3$.  
\end{theorem}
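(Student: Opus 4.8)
The plan is to turn $\ell_1$-testing of $p$ against $q$ into a family of cheap \emph{$\ell_\infty$-closeness tests} on conditional distributions, run at every length scale simultaneously and non-adaptively. \emph{The algorithm.} For each $j\in\{0,1,\dots,\lceil\log_2 n\rceil\}$ draw $r=\poly\log n/\ve^2$ independent uniformly random sets $S\subseteq[n]$ with $|S|=2^j$, and for each such $S$ request $m=\poly\log n/\ve^2$ conditional samples from $p$ restricted to $S$ and the same number from $q$ restricted to $S$; all $O(\log n)\cdot r$ query sets are fixed before any outcome is seen, so the algorithm is non-adaptive. For every query set $S$ and every coordinate $i\in S$ that actually appears among the samples, compare the empirical conditional masses $\hat p_S(i),\hat q_S(i)$ and raise a flag if $|\hat p_S(i)-\hat q_S(i)|$ exceeds a per-coordinate threshold $\tau\asymp\sqrt{\hat p_S(i)\,\log(n/\ve)/m}$. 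Output ``far'' if some flag is raised and ``equal'' otherwise.

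\emph{Completeness.} If $p=q$ then $p_S=q_S$ for every $S$, so each comparison concerns two independent samples from the \emph{same} distribution; a Bernstein bound makes each comparison raise a spurious flag with tiny probability, and a union bound over the $\poly\log n/\ve^{O(1)}$ relevant (set, observed-coordinate) pairs — note at most $O(m)$ coordinates per set can appear — keeps the total false-rejection probability below $1/3$, which fixes the logarithmic factors in $\tau$.

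\emph{Soundness — the heart.} Assume $\dtv(p,q)\ge\ve$. Discard coordinates with $\tfrac12|p_i-q_i|<\ve/(10n)$ (they total at most $\ve/10$) and bucket the rest by the pair $\bigl(\lfloor\log_2\max(p_i,q_i)\rfloor,\ \lfloor\log_2(|p_i-q_i|/\max(p_i,q_i))\rfloor\bigr)$; there are $O(\log^2 n)$ buckets, so one bucket $B$ — with $N:=|B|$ elements, each of mass $\asymp\theta$ and each with $|p_i-q_i|\asymp\gamma\theta$ — carries $N\gamma\theta\gtrsim\ve/\log^2 n=:\ve'$ of the distance; the trivial constraints $N\theta\le1$ and $\gamma\le2$ then force both $N\theta\gtrsim\ve'$ and $\gamma\gtrsim\ve'$. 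Now take $k\asymp n/N$, one of our $O(\log n)$ scales up to a constant. For a uniformly random $S$ of size $k$, with constant probability $S$ contains $\Theta(1)$ elements of $B$ and has total $p$- and $q$-mass $\Theta(k/n)$; on this event each element of $B$ becomes a conditional atom of mass $\asymp n\theta/k=N\theta$ with the \emph{same} relative bias $\asymp\gamma$, i.e.\ absolute conditional discrepancy $\asymp\gamma N\theta=N\gamma\theta\gtrsim\ve'$. Thus $\|p_S-q_S\|_\infty\gtrsim\ve'$ on a coordinate of conditional mass in $[\ve',1]$, which our per-coordinate test detects with $m=\poly\log n/\ve^2$ samples (the coordinate appears $\gtrsim m\ve'$ times, and the signal $\ve'$ beats the threshold since $N\theta\le1$). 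Repeating at that scale $r$ times boosts the good-$S$ event to near-certainty, so some test rejects with probability $\ge 2/3$.

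\emph{Main obstacle.} The genuinely delicate part is controlling the conditional distribution on the relevant scale $k\asymp n/N$: heavier buckets — especially a few very heavy, possibly \emph{unbiased} elements — can inflate the conditional masses or, when they land in $S$, rescale everything, so one must show that a typical random $S$ of size $k$ avoids elements much heavier than $\theta$ (or that their presence only perturbs relative masses by $O(1)$), make the several ``$S$ is typical'' events (right count of $B$-elements, concentrated total mass, no rogue heavy element) hold simultaneously with enough probability that $r$ repetitions suffice, and verify that the $\ell_\infty$ test neither false-positives on the many unbiased light/heavy-ish coordinates nor misses the $\Theta(\ve')$ signal. Bookkeeping the $O(\log n)$ scales, $O(\log^2 n)$ buckets, the union-bound logarithms, and the slack in the conditioning estimates is what compounds into the stated $\log^{12}n$ exponent, which I would not attempt to optimize.
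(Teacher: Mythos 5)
Your algorithm and the $\ell_1$-to-$\ell_\infty$ reduction match the paper's \ANACONDA, and the completeness argument and general bucketing strategy are broadly in the right direction. But the soundness argument has a genuine gap at its center. You claim that once $S$ contains an element $i\in B$ and $p(S),q(S)=\Theta(k/n)$, the element ``becomes a conditional atom of mass $\asymp N\theta$ with the \emph{same} relative bias $\asymp\gamma$,'' hence $|p_S(i)-q_S(i)|\gtrsim\gamma N\theta$. That does not follow. Writing
$$p_S(i)-q_S(i)=\frac{p(i)q(S)-q(i)p(S)}{p(S)q(S)}=\frac{(p(i)-q(i))\,q(S)-q(i)\,(p(S)-q(S))}{p(S)q(S)},$$
the two terms in the numerator \emph{cancel} whenever the rest of $S$ carries discrepancy of the same sign and comparable relative size as $i$. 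Concretely, suppose $S$ happens to consist of exactly two elements of $B$, both with $p=(1+\gamma)\theta$ and $q=\theta$. Then $p(S)=2(1+\gamma)\theta$, $q(S)=2\theta$, $p(S)/q(S)=1+\gamma$, yet $p_S(i)=q_S(i)=1/2$ for each element: the conditional discrepancy is exactly zero. So ``perturbs relative masses by $O(1)$'' (or even by $1+O(\gamma)$, as in this example) is not the right invariant, and controlling heavy/unbiased elements, as your obstacle paragraph proposes, is not the missing piece. What you actually need is that the discrepancy of $S\setminus\{i\}$ has the \emph{opposite} sign to that of $i$, or is much smaller than $\gamma\theta$.

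This is exactly what the paper's Proposition~\ref{prop:set} isolates: the quantity that must be large relative to $p(S)+q(S)$ is $\min\{z(i),\,z(i)-z(S)\}$, i.e.\ one needs $z(i)>0$ together with $z(S\setminus\{i\})<0$, both of magnitude $\gtrsim (p(S)+q(S))/\poly\log n$, so that the bias of $i$ is reinforced rather than cancelled by the normalization. Establishing that such a set arises with useful probability is the technical heart of the proof, and because the positive and negative parts of $z$ may live at very different mass scales, the paper keeps its bins sign-aware (separate bin measures $b^+$ and $b^-$) and does a case analysis in Lemma~\ref{lem:main-analysis}: either $\dk(b^+,b^-)$ is small, in which case positive and negative discrepancy coexist at nearby scales and one can capture one element of each in $S$ (Lemma~\ref{lem:close}); or $\dk(b^+,b^-)$ is large, in which case a careful choice of $r$ near a heavy positive bin $j^*$ forces $z(S\setminus\{i\})$ to be negative by $\Omega(r/\log n)$. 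Your sign-blind bucketing by $(\theta,\gamma)$ collapses $z^+$ and $z^-$ into one bucket and gives no handle on this; repairing the gap would essentially amount to reproducing the paper's key lemma. The remaining machinery in your sketch (DKW/Bernstein-style $\ell_\infty$ detection, Markov bounds on the stray mass, union bounds over scales) is sound in spirit, but without the sign argument the soundness direction does not go through.
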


For the special case of uniformity testing, we have a sharper analysis, allowing us to obtain a $\tilde O(\log n)$ query algorithm, which nearly matches the $\Omega(\log n)$ lower bound of~\cite{AcharyaCK15b}:
\begin{theorem}[Non-Adaptive Uniformity Testing]
\label{thm:uniform}
There exists an algorithm which, given \NACOND access to an unknown distribution $p$ on $[n]$, makes $\tilde O\left(\frac{\log n}{\ve^2}\right)$ queries to the oracle on $p$ and distinguishes between the cases $p = \mathcal{U}_n$ and $\dtv(p,\mathcal{U}_n) \geq \ve$ with probability at least $2/3$, where $\mathcal{U}_n$ is the uniform distribution on $[n]$.
\end{theorem}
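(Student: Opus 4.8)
The plan is to carry out the ``$\ell_1$-to-$\ell_\infty$'' reduction advertised in the abstract, and to exploit the fact that, for uniformity, the null hypothesis $p=\mathcal{U}_n$ is completely known and symmetric — which is precisely what permits the sharper $\tilde O(\log n)$ bound (as opposed to the identity/equivalence case).

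\emph{Step 1 (bucketing).} Assume $\dtv(p,\mathcal{U}_n)\ge\ve$, so $\sum_i (p(i)-1/n)^+\ge\ve$. Elements with $p(i)\le 1/n$ contribute nothing, and the elements whose relative excess $\g_i:=np(i)-1$ is at most $\ve/2$ contribute at most $\ve/2$ in total; hence the ``heavy'' elements with $\g_i>\ve/2$ already carry $\ge\ve/2$ of the excess. Partition these dyadically by the value of $\g_i$ into $O(\log n)$ buckets. By averaging, some bucket $B$ — all of whose elements have weight $\approx (1+\g)/n$ for a common $\g$ — satisfies $\sum_{i\in B}\g_i=\Omega(\ve n/\log n)$, so $N:=|B|$ obeys $N\g=\Omega(\ve n/\log n)$. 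It therefore suffices to design, for each of the $O(\log n)$ dyadic guesses of $N$ (equivalently of $\g$, since the product $N\g$ is pinned down up to the $\ve/\log n$ slack), a non-adaptive test that accepts $\mathcal{U}_n$ but rejects whenever such a bucket exists.

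\emph{Step 2 (per-scale test).} For a guessed $N$, let the query sets be independent random subsets of $[n]$ keeping each element with probability $\Theta(1/N)$, so a typical set $S$ has $|S|\approx n/N$ and, because $|B|=N$, captures at least one element of $B$ with constant probability. (At $N=1$ the keep-probability is $\Theta(1)$, so this degenerates into ordinary sampling on an essentially full set, which is exactly the regime where a single super-heavy element is revealed by its sheer mass.) Conditioned on $S$ capturing a bucket element $i^\star$, one computes $p_S(i^\star)=p(i^\star)/p(S)\gtrsim\ve/\log n$ (using $p(S)\approx|S|/n\approx 1/N$ together with $N\gtrsim\ve n/(\g\log n)$), and more precisely $p_S(i^\star)$ is of order $(1+\g)/|S|$: inside $S$ the conditional distribution $p_S$ has an element whose mass exceeds the uniform value $1/|S|$ by a relative factor $\Theta(\g)$. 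This $\ell_\infty$-violation inside $S$ is cheap to certify from a few conditional samples of $S$ via a collision/empirical-frequency statistic, whose expectation under $p$ exceeds its expectation under $\mathcal{U}_n$ by an amount controlled by $p_S(i^\star)$; crucially, under $\mathcal{U}_n$ the full statistic (aggregated over all sets at this scale) has an \emph{exactly computable} distribution, so a clean threshold separates the two hypotheses. Balancing the number of sets against the number of conditional samples per set at each scale, and summing a geometric series over the $O(\log n)$ dyadic scales, yields $\poly\log(n)/\ve^2$ queries; squeezing the log-factors down to the claimed $\tilde O(\log n/\ve^2)$ uses the exact-null feature and the fact that $N\g$ is essentially fixed across scales.

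\emph{Step 3 (the main obstacle).} The hard part is not the ``planted heavy bucket'' calculation but showing, uniformly over the entire range of scales, that the test neither falsely rejects nor masks the signal. One must handle the two extremes in the same framework — a single, extremely heavy element ($N=1$, which forces $\g=\Omega(\ve n/\log n)$) versus a bucket of $\Omega(n)$ barely-heavy elements ($\g=\Theta(\ve)$, where tiny sets suffice but the relative deviation is only $\Theta(\ve)$) — and verify that the same scaled family of random sets covers everything in between. Moreover, inside a captured set $S$ the ``background'' need not look uniform: $S$ can also contain elements of \emph{other} buckets and a non-uniform tail of light elements, and one has to argue these perturbations can only inflate the collision statistic, or change it by a lower-order amount, so the null threshold survives. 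This is the uniformity-testing shadow of the difficulty flagged for equivalence testing in the abstract — ``the complex structure of how two arbitrary distributions may differ'' — here in the benign special case $q=\mathcal{U}_n$, which is exactly why a single global collision statistic against an exactly-known null goes through and the final bound is essentially information-theoretically optimal.
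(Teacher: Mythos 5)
Your plan correctly identifies the high-level mechanism (random subsets at a random dyadic scale, then an $\ell_\infty$ test inside the conditioned subdistribution), but two of the steps you defer are exactly where the real work lies, and one of your preparatory steps introduces a $\log n$ loss that the paper avoids and that you do not actually show how to recover.

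First, the bucketing in Step 1 averages: by pigeonhole you fix a single bucket $B$ carrying $\Omega(\ve/\log n)$ of the excess, which forces $N\g=\Omega(\ve n/\log n)$. Carrying this through Step 2, the $\ell_\infty$ deviation you can guarantee for a captured $i^\star$ inside $S$ is $\approx \g N/n = \Omega(\ve/\log n)$, not $\Omega(\ve)$; detecting that with an empirical/collision statistic needs $\Omega(\log^2 n/\ve^2)$ conditional samples per scale, and your geometric-series bookkeeping over scales then yields $\tilde O(\log^3 n/\ve^2)$, not $\tilde O(\log n/\ve^2)$. The line ``squeezing the log-factors down\dots uses the exact-null feature and the fact that $N\g$ is essentially fixed across scales'' is an assertion, not an argument, and I do not see how it closes a genuine $\log^2 n$ gap. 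The paper's route around this is structural rather than statistical: its key lemma (Lemma~\ref{lem:uniform-key}) shows that a \emph{single random draw} of a scale $r$ and set $S$ already hits, with probability $\Omega(1/\log n)$, some element with $z^+(i)\ge 1/r$ — i.e., discrepancy comparable to the reciprocal of the target set size — by summing the hit-probabilities over all bins rather than averaging down to one. That is what puts the resulting per-set violation at $\Omega(\ve)$, after which DKW only needs $O(\log\log n/\ve^2)$ samples per repeat.

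Second, your Step 3 names the real obstacle (``the background inside $S$ need not look uniform'') and then waves it away (``these perturbations can only inflate the collision statistic, or change it by a lower-order amount''). This is false as stated: if the companions of $i^\star$ in $S$ are themselves positively biased, they inflate $p(S)$ and can wipe out the signal entirely. Controlling this is the content of the paper's Lemmas~\ref{lem:rest-signal} and \ref{lem:rest-noise} (that with constant probability $\mathcal{U}_n(S\setminus\{i\})=\Theta(1/r)$ and $z(S\setminus\{i\})\le 4/r$). Moreover your bucketing works only with the positive excesses $\g_i>0$, but in the ``Paninski-like'' regime the right mechanism is a two-element set containing one positively and one negatively biased symbol — this is the paper's Case~I (Section~\ref{sec:many-small}), and it cannot be reached from your one-sided bucket scheme. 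So the proposal's skeleton is right, but the two things it leaves open — a union-style hit lemma that avoids the averaging loss, and concrete control of the set's background (including the negative-noise side) — are precisely the content of the paper's proof, and without them the claimed $\tilde O(\log n/\ve^2)$ bound does not follow.
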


As a corollary of Theorem~\ref{thm:uniform}, we can obtain an improved upper bound for identity testing with an adaptation of the reduction from identity testing to uniformity testing of~\cite{ChakrabortyFGM16} (inspired by the bucketing techniques of~\cite{BatuFRSW00,BatuFFKRW01}).
\begin{theorem}[Non-Adaptive Identity Testing]
\label{thm:identity}
There exists an algorithm which, given \NACOND access to an unknown distribution $p$ on $[n]$ and a description of a distribution $q$ over $[n]$, makes $\tilde O\left(\frac{\log^2 n}{\ve^2}\right)$ queries to the oracle on $p$ and distinguishes between the cases $p = q$ and $\dtv(p,q) \geq \ve$ with probability at least $2/3$.
\end{theorem}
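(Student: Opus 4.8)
The plan is to obtain Theorem~\ref{thm:identity} as a consequence of Theorem~\ref{thm:uniform} via a non-adaptive implementation of the standard reduction from identity testing to uniformity testing of~\cite{ChakrabortyFGM16} (which in turn rests on the bucketing ideas of~\cite{BatuFRSW00,BatuFFKRW01}). The core of that reduction is ``flattening.'' Since $q$ is known in advance, first round it to a grained distribution $q'$ whose probabilities are all integer multiples of $1/m$ for a common denominator $m=\Theta(n/\ve)$; this costs only $\dtv(q,q')=O(\ve)$, which we fold into the proximity parameter (and, with a little care as discussed below, one can arrange the completeness case to map to the \emph{exactly} uniform distribution). Then partition $[m]$ into blocks $A_i$, $i\in[n]$, with $|A_i|=q'_i m$, and turn a sample $i\sim p$ into a sample of the induced distribution $p'$ on $[m]$ by returning a uniformly random element of $A_i$. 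A one-line computation gives $\dtv(p',\mathcal{U}_m)=\dtv(p,q')$, so $p=q$ forces $\dtv(p',\mathcal{U}_m)=O(\ve)$ while $\dtv(p,q)\ge\ve$ forces $\dtv(p',\mathcal{U}_m)=\Omega(\ve)$; moreover $\log m=O(\log n+\log(1/\ve))$.

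The algorithm then runs the non-adaptive uniformity tester of Theorem~\ref{thm:uniform} on $p'$ over $[m]$ with proximity parameter a small constant multiple of $\ve$, and accepts iff that tester accepts. The one thing to check is that each of its $[m]$-queries can be answered using \NACOND access to $p$. When a query set $S'\subseteq[m]$ is a union of blocks, $S'=\bigcup_{i\in T}A_i$, this is immediate: query $p$ conditioned on $T$ to obtain some $i$, then output a uniform element of $A_i$, which reproduces $p'$ conditioned on $S'$ exactly. Because $q$ (hence the blocks, hence the sets $T$) is fixed in advance and the uniformity tester is itself non-adaptive, the composed procedure specifies all of its $p$-queries up front and is therefore non-adaptive. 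For query sets that are not block-aligned I would either exploit structural properties of the query sets produced by the proof of Theorem~\ref{thm:uniform} or coarsen each query set to a nearby block-aligned one (alternatively, simulate it with a short rejection-sampling argument), paying at most an $O(\log n)$ factor in the number of queries. Combined with the $\tilde O(\log m/\ve^2)$ queries of Theorem~\ref{thm:uniform}, this yields the stated $\tilde O(\log^2 n/\ve^2)$ bound.

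Correctness is then immediate from the displayed relation between $\dtv(p,q)$ and $\dtv(p',\mathcal{U}_m)$: if $p=q$ then $\dtv(p',\mathcal{U}_m)$ is negligible and the uniformity tester accepts with probability $\ge 2/3$; if $\dtv(p,q)\ge\ve$ then $\dtv(p',\mathcal{U}_m)=\Omega(\ve)$ and it rejects with probability $\ge 2/3$.

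I expect the difficulty to lie entirely in this bookkeeping rather than in any new testing idea. Two points need care: (i) making the completeness case map to a sufficiently (ideally exactly) uniform instance despite $q$ having ungrained probabilities --- the usual fix is to distribute the rounding leftover $m-\sum_i\lfloor q_i m\rfloor$ across the blocks so that $p=q$ yields $p'=\mathcal{U}_m$ --- and (ii) implementing an arbitrary $[m]$-query against the $p$-oracle without destroying non-adaptivity and without more than an $O(\log n)$ overhead, which is straightforward for equal-size blocks but needs the coarsening/rejection argument in general. All of the substantive distribution-testing work is already carried out in Theorem~\ref{thm:uniform}; the reduction is intentionally lightweight, as in its adaptive and plain-sampling analogues.
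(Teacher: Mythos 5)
There is a genuine gap, and it is the very one the paper explicitly warns against. Your reduction is the ``flattening'' reduction of Goldreich from identity to uniformity, and the paper states in Section~\ref{sec:approach} (footnote) and again in Section~\ref{sec:open} that this reduction ``is not known to apply in either the \NACOND or \COND models,'' precisely because mapping the problem onto a larger domain $[m]$ ``would require more `granular' conditional samples than afforded by standard conditional sampling models.'' The step you flag as ``the one thing to check'' is exactly where this fails. \ANACONDA{} queries a \emph{uniformly random} subset $S' \subseteq [m]$ (each element independently with probability $1/r$), so $S'$ will essentially never be block-aligned: it will cut each block $A_i$ into a partial set $S' \cap A_i$. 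To sample from $p'$ conditioned on $S'$ you must select block $i$ with probability proportional to $p(i)\,|S'\cap A_i|/|A_i|$, not proportional to $p(i)$; the weights $|S'\cap A_i|/|A_i|$ are not available from the $p$-oracle, and rejection sampling against them incurs an acceptance-rate loss governed by $\min_j (|S'\cap A_j|/|A_j|) / \max_j (|S'\cap A_j|/|A_j|)$, which can be polynomially small (e.g. when $q$ has both tiny and large probabilities so block sizes range from $1$ to $\mathrm{poly}(n)$, or even with equal blocks of size $\Theta(1/\ve)$ when some block is hit by a single element and another by all of them). ``Coarsening'' $S'$ to a block-aligned set changes the distribution being tested and destroys the $\ell_\infty$-discrepancy argument on which Theorem~\ref{thm:uniform} relies; there is no $O(\log n)$-overhead simulation lurking here.

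The paper instead avoids domain blow-up entirely. It adapts the bucketing reduction of~\cite{ChakrabortyFGM16}: partition $[n]$ (not $[m]$) into $k=\Theta(\log(n/\ve))$ buckets $M_1,\dots,M_k$ on which $q$ is multiplicatively almost-constant, so that each conditional $q_{M_j}$ is $O(1/|M_j|)$-close to uniform in $\ell_\infty$. Conditioning on any $S\subseteq M_j$ is a legitimate \NACOND query on the original domain, so no simulation is needed. It then (a) runs a near-uniform identity tester (Theorem~\ref{thm:near-uniform}, a mild extension of the uniformity analysis) inside each bucket, and (b) tests identity of the coarse bucket-mass distributions with a standard \SAMP tester. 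The $\tilde O(\log^2 n/\ve^2)$ bound comes from $\Theta(\log n)$ buckets times $\tilde O(\log n/\ve^2)$ per bucket. So the high-level intuition that ``all the testing work is in the uniformity analysis'' is right, but the reduction that makes this sound in \NACOND is bucketing on $[n]$, not flattening onto $[m]$.
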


\noindent Our results and a comparison with the complexity of testing in various oracle models are presented in Table~\ref{fig:batable}.
\newcommand{\pb}[2]{\parbox[c][][c]{#1}{\strut#2\strut}}
\begin{table}[ht]\centering
      \begin{tabular}{| c | c | c | c |}
      \hline
      {\bf Model} & {\bf Uniformity} & {\bf Identity} & {\bf Equivalence} \\ \hline
      {\bf \SAMP} & $\Theta\left(\frac{\sqrt{n}}{\ve^2}\right) \cite{Paninski08, ValiantV17a}$ & $\Theta\left(\frac{\sqrt{n}}{\ve^2}\right) \cite{Paninski08, ValiantV17a}$ & $\Theta\left(\max\left(\frac{n^{2/3}}{\eps^{4/3}}, \frac{n^{1/2}}{\eps^2}\right) \right)$ \cite{ChanDVV14}\\  \hline
      \multirow{2}{*}{\bf \centering \NACOND} & $\tilde O\left(\frac{\log n}{\ve^2}\right) \textsf{\red [this work]}$& 
        $\tilde O\left(\frac{\log^{2} n}{\ve^2}\right) \textsf{\red [this work]}$ & $\tilde O\left(\frac{\log^{12} n}{\ve^2}\right) \textsf{\red [this work]}$\\
      & $\Omega\left(\log n\right)$ \cite{AcharyaCK15b}  & $\Omega\left(\log n\right) \cite{AcharyaCK15b}$ & $\Omega\left(\log n\right) \cite{AcharyaCK15b}$ \\ \hline
      \multirow{2}{*}{{\bf \centering \COND}} & $\tilde O\left(\frac{1}{\ve^2}\right) \cite{CanonneRS15}$& $\tilde O\left(\frac{1}{\ve^2}\right) \cite{FalahatgarJOPS15}$& 
        $\tilde O\left(\frac{\log \log n}{\ve^5}\right) \cite{FalahatgarJOPS15}$ \\
      & $\Omega\left(\frac{1}{\ve^2}\right)$\cite{CanonneRS15} & $\Omega\left(\frac{1}{\ve^2}\right)$\cite{CanonneRS15} & $\Omega\left(\sqrt{\log \log n}\right) \cite{AcharyaCK15b}$\\ \hline
      \end{tabular}
    \caption{Summary of results, and a comparison of uniformity, identity, and equivalence testing in different sampling oracle models. Problems get harder as one moves up and to the right in this table.}
\label{fig:batable}
\end{table}

We present a unified algorithm, \ANACONDA, for both equivalence and uniformity testing, the only difference is in the choice of parameters.
\ANACONDA{} is quite simple to describe, requiring only four sentences in Section~\ref{sec:approach}.\footnote{Perhaps if we tried harder, we could describe it in two sentences, plus the word ``repeat.''}
We consider this algorithmic simplicity to be an advantage of \ANACONDA, though we regret that its analysis is less simple -- we elaborate on the technical challenges in Section~\ref{sec:approach}.

In the following discussion, for ease of exposition, we focus on the case where $\ve$ is a constant.
Our bound for equivalence testing in the \NACOND model is the first tailored to this setting. 
Specifically, the best upper bound was $O(n^{2/3})$ (for the harder problem of equivalence testing in the \SAMP model~\cite{ChanDVV14}), and the best lower bound was $\Omega(\log n)$ (for the easier problem of uniformity testing in the \NACOND model~\cite{AcharyaCK15b}). These results left open the question of the true complexity of equivalence testing: is it polynomial in $\log n$, or polynomial in $n$?
Our algorithm gives an exponential improvement in the query complexity by showing that the former is true: equivalence testing enjoys significant savings in the query complexity when we switch from the \SAMP to the \NACOND oracle model.

More generally, as mentioned before, our results expose a qualitatively interesting relationship between identity and equivalence testing in the \NACOND model.
In the standard sampling model (\SAMP), the complexity of these problems is known to be polynomially related ($\Theta(n^{1/2})$ versus $\Theta(n^{2/3})$).
However, in the conditional sampling model with adaptivity (\COND), there is a ``chasm'' between these two complexities: one has a constant query complexity, while the other has a complexity which is doubly logarithmic in $n$ ($\Theta(1)$ versus $\poly \log \log n$).
Our results demonstrate that when we remove adaptivity from the conditional sampling model (\NACOND), the relationship is qualitatively quite different.
In this setting, the ``chasm'' closes, and the complexity of both problems is once again polynomially related: both are $\poly \log n$. 
Interestingly, this complexity is intermediate to the complexity of the same problems in the \SAMP and \COND models, by an exponential factor on either side.
These relationships are all summarized in Table~\ref{fig:batable}.
We note that our results further address an open problem of Fischer~\cite{Fischer14}, which inquires about the complexity of equivalence testing with conditional samples.

In terms of specific sample complexities, we observe that our upper bound for uniformity testing is nearly tight: our $\tilde O\left(\frac{\log n}{\ve^2}\right)$ upper bound is complemented by the $\Omega(\log n)$ lower bound of~\cite{AcharyaCK15b}. 
It improves upon the algorithm of~\cite{ChakrabortyFGM13}, which has query complexity $O\left(\frac{\log^{12.5}n}{\varepsilon^{17}}\right)$.
Our algorithm for identity testing, with complexity $\tilde O\left(\frac{\log^2 n}{\ve^2}\right)$, also significantly improves over theirs, which has a similar complexity as their algorithm for uniformity testing.
We again mention that our bound for equivalence testing is exponentially better than the previous best algorithm for this problem (which is the $O_\ve(n^{2/3})$-query algorithm in the \SAMP model of~\cite{ChanDVV14}).

\subsection{Techniques and Proof Ideas}
\label{sec:approach}
At the core of our approach is reducing the problem from $\ell_1$-testing to $\ell_\infty$-testing, the latter of which is much cheaper in terms of sample complexity.
In particular, throughout this exposition, keep in mind that one can estimate a distribution up to $\ve$ in $\ell_\infty$-distance at a cost of $O(1/\ve^2)$ samples (cf. Corollary~\ref{cor:linf}).
In order to give intuition on how such an approach could possibly work, we focus on two very simple instances of uniformity testing.
In the first instance, $p$ is a distribution with a single ``spike'': for some $i^* \in [n]$, $p(i^*) = \frac{1}{n} + \ve$, and for $i \neq i^*$, $p(i) = \frac{1-\ve}{n}$.
This can be detected by simply choosing $S = [n]$ and querying it with \NACOND $O(1/\ve)$ times: the empirical distribution $\hat p(i^*)$ will have a similar spike, betraying that the distribution is non-uniform.
In the second instance, $p$ is the ``Paninski construction'' (used as the lower bound in~\cite{Paninski08}): a random half of the domain elements have probability $\frac{1+\ve}{n}$, while the other half have probability $\frac{1-\ve}{n}$.
This can be detected by choosing $S$ to be two random symbols, and again querying this subset $O(1/\ve^2)$ times.
With constant probability, the two symbols will be from different sets.
While the $\ell_\infty$ distance from uniformity on each symbol is only $\frac{\ve}{n}$, in this \emph{conditional} distribution, it is increased to $\ve$, allowing easy detection.

These two examples illustrate the heart of our approach: our algorithm, \ANACONDA, attempts to find a query set in which the discrepancy of a single item is large in comparison to the total probability mass of the set. 
One of our key lemmas (Lemma~\ref{lem:uniform-key}) shows that this is possible with probability $\geq \Omega\left(\frac{1}{\log n} \right)$.
The flavor is somewhat reminiscent of Levin's Economical Work Investment Strategy~\cite{Goldreich14}.
While the two instances above are straightforward, a more careful analysis is required to avoid paying excess factors of $\log n$, particularly for uniformity and identity testing.
That said, all the complexity is pushed to the analysis, and the algorithm itself is very simple to describe:

\begin{quote}
First, the algorithm chooses a uniformly random power of two between $2$ and $n$ -- roughly, this serves as a ``guess'' for (the inverse of) the size of the set which represents the discrepancy between the distributions.
Next, the algorithm chooses a random set $S \subseteq [n]$ of this size.
Finally, it performs $\NACOND$ queries to $S$ (on both distributions, for equivalence testing), in order to form an empirical distribution (which is accurate in $\ell_\infty$-distance) and check whether there is a discrepant symbol or not.
This process is repeated several times, and if we fail to ever detect a discrepant symbol, we can conclude that the distributions are equal.
\end{quote}

Since uniformity testing is relatively well-behaved, the key lemma mentioned above (Lemma~\ref{lem:uniform-key}) does most of the work.
This is because in this setting, once we have a handle on the distribution of the discrepancy, it is easy to reason about how much of the mass from the uniform distribution is contained in a query set.
We require a few additional concentration arguments on the total discrepancy and probability mass contained in the query set, as well as a separate analysis for the case where $|S|$ needs to be small and this concentration does not hold.

We then leverage our algorithm for uniformity testing to provide an algorithm for identity testing.
This uses the reduction of~\cite{ChakrabortyFGM13}\footnote{We note that the \SAMP-model reduction of~\cite{Goldreich16}, from identity testing to uniformity testing, is not known to apply in either the \NACOND or \COND models.}, which partitions the domain so that the conditional distribution on each part is close to uniform, and tests for identity on each part.
This requires a non-adaptive identity tester for distributions which are close to uniform (in $\ell_\infty$-distance) -- we show our analysis for uniformity testing can be adapted to handle this case.
Our application crucially modifies their reduction in order to minimize the sample complexity, as \ANACONDA{} can test against distributions which are further from uniform than theirs.
Specifically, the analysis in Section~\ref{sec:uniform} for uniformity testing actually allows us to test for identity to any distribution which is $O(1/n)$-close to uniform in $\ell_\infty$ distance.
This is in comparison to the weaker tester of~\cite{ChakrabortyFGM13}, which only allows for distributions which are $O(\ve/n)$-close to uniform.

Finally, we turn to the most technically difficult problem of equivalence testing.
This case turns out to be more challenging, as we must simultaneously reason about $p(i), p(S \setminus i), q(i)$, and $q(S \setminus i)$ -- as mentioned prior, it is much easier to control the latter two quantities for uniformity testing.
To establish our result, we must argue that \ANACONDA{} identifies a set $S$ where both differences $p(i) - q(i)$ and $p(S \setminus i) - q(S \setminus i)$ have opposite signs and are simultaneously relatively large compared to the magnitudes of $p(i), p(S \setminus i), q(i)$, and $q(S \setminus i)$ (Proposition~\ref{prop:set}). 
We consider the distribution of the discrepancy $p-q$, with a case analysis depending on the relationship between the ``typical'' magnitudes of the positive and negative discrepancies.
If these magnitudes are close, then we can select a ``smaller'' set $S$ (where ``smaller'' is defined based on these magnitudes) which has a reasonable probability of including a positively and negatively discrepant element of these magnitudes (Lemma~\ref{lem:close}).
On the other hand, if these magnitudes are far, then with an appropriate choice of the size of the set $S$, there is a significant chance that our set will contain an element $i$ with significant positive discrepancy $p(i) - q(i)$, while the total discrepancy in the set $p(S \setminus i) - q(S \setminus i)$ is very negative (Case 2 in Lemma~\ref{lem:main-analysis}).
Despite all these technicalities, we emphasize that the algorithm itself is still quite simple; in particular, it is identical to the algorithm for uniformity testing (modulo some parameter modifications).

\subsection{Organization}
The organization of the paper is as follows.
In Section~\ref{sec:preliminaries}, we cover various preliminary definitions.
In Section~\ref{sec:algorithm}, we unveil \ANACONDA{}.
In Section~\ref{sec:uniform}, we analyze our algorithm for the special case of uniformity testing.
This case is conceptually much simpler than equivalence testing, but exposes some of the key intuitions.
In Section~\ref{sec:equivalence}, we describe the full analysis for equivalence testing.
In Section~\ref{sec:identity}, we adapt the reduction of~\cite{ChakrabortyFGM16} to obtain a more efficient algorithm for identity testing.
We conclude in Section~\ref{sec:open} with some open problems for further investigation.

\section{Preliminaries}
\label{sec:preliminaries}
In this paper, we will focus on discrete distributions over the support $[n]$.
We denote the distributions of interest using $p$ and $q$, where $p(i)$ is the probability placed by distribution $p$ on symbol $i$.
For a set $S$, let $p(S) = \sum_{i \in S} p(i)$.
Furthermore, let $p_S$ be the conditional distribution of $p$ restricted to $S$, i.e., $p_S(i) = p(i)/p(S)$.

We use the following definition of the conditional sampling model.
Note that this uses the convention of~\cite{ChakrabortyFGM13} of sampling uniformly from query sets with $0$ measure, rather than the convention of~\cite{CanonneRS14} which immediately fails if given such a set, as the latter convention trivializes \NACOND, reducing it to \SAMP.
\begin{definition}
\label{def:cond}
A \emph{conditional sampling} oracle for a distribution $p$ is defined as follows: the oracle takes as input a \emph{query set} $S \subseteq [n]$, and returns a symbol $i \in S$, where the probability that $i$ is returned is equal to $p_S(i) = p(i)/p(S)$.
If $p(S) = 0$, then a symbol $i \in S$ is returned uniformly at random.

Given an \emph{adaptive conditional sampling} oracle (a \COND oracle), the algorithm may query \emph{adaptively}: before submitting each query set $i$, the algorithm is allowed to view the results of queries $1$ through $i-1$.
In contrast, given a \emph{non-adaptive conditional sampling} oracle (a \NACOND oracle), the algorithm must be non-adaptive: it must submit all query sets in advance of viewing any of their results.
\end{definition}

We will use the following distances on probability distributions:
\begin{definition}
The \emph{total variation distance} between distributions $p$ and $q$ is defined as $$\dtv(p,q) = \frac12\sum_{i \in [n]} \left|p(i) - q(i)\right|.$$
\end{definition}

\begin{definition}
The \emph{Kolmogorov distance} between distributions $p$ and $q$ is defined as $$\dk(p,q) = \max_{j \in [n]} \left|\sum_{i = 1}^j p(i) - \sum_{i=1}^j q(i) \right|.$$
\end{definition}

The Dvoretzky-Kiefer-Wolfowitz (DKW) inequality gives a generic algorithm for learning any distribution with respect to the Kolmogorov distance~\cite{DvoretzkyKW56}. 
\begin{lemma}[\cite{DvoretzkyKW56},\cite{Massart90}]
 \label{lem:dkw}
  Let $\hat p_m$ be the empirical distribution generated by $m$ i.i.d.\ samples from a distribution $p$.
  We have that 
$$\Pr[\dk(p,\hat p_m) \geq \ve] \leq 2e^{-2m\ve^2}.$$
  In particular, if $m = \Omega(\log(1/\d)/\ve^2)$, then $\Pr[\dk(p, \hat p_m) \geq \ve] \leq \d$.
\end{lemma}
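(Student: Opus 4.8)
The plan is to reduce the statement to the classical one-dimensional Kolmogorov--Smirnov setting for the uniform distribution, and then invoke (or reprove) the sharp one-sided exponential tail bound. First I would apply the quantile transform: writing $F(j) = \sum_{i=1}^j p(i)$ for the CDF of $p$ (with $F(0)=0$) and $\phi(u) = \min\{j : F(j)\geq u\}$ for its quantile function, the variables $X_t := \phi(U_t)$ obtained from i.i.d.\ uniforms $U_1,\dots,U_m$ on $(0,1]$ are i.i.d.\ with law $p$, since $\Pr[\phi(U_t)\leq j] = \Pr[U_t \leq F(j)] = F(j)$. Under this coupling, $\sum_{i=1}^{j}\hat p_m(i) = \hat G_m(F(j))$ where $\hat G_m$ is the empirical CDF of the $U_t$, so
$$\dk(p,\hat p_m) \;=\; \max_{j\in[n]}\bigl|\hat G_m(F(j)) - F(j)\bigr| \;\leq\; \sup_{u\in(0,1]}\bigl|\hat G_m(u) - u\bigr|.$$
Hence it suffices to prove $\Pr[\,\sup_u|\hat G_m(u)-u|\geq\ve\,] \leq 2e^{-2m\ve^2}$ for the uniform empirical process, which has the advantage of being a single clean object independent of $p$ and $n$.

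Next I would reduce to a one-sided statement: it is enough to bound $\Pr[\,\sup_u(\hat G_m(u)-u)\geq\ve\,]$ and, symmetrically, $\Pr[\,\sup_u(u-\hat G_m(u))\geq\ve\,]$, each by $e^{-2m\ve^2}$; a union bound over these two events produces the factor $2$. For the one-sided bound, since $\hat G_m$ jumps only at the order statistics $U_{(1)}\leq\dots\leq U_{(m)}$, one has $\sup_u(\hat G_m(u)-u) = \max_{1\leq k\leq m}\bigl(\tfrac{k}{m}-U_{(k)}\bigr)$, and I would bound the probability this maximum exceeds $\ve$ by conditioning on the index $k$ of the first crossing and using exchangeability of the order statistics together with a ballot-type (cycle-lemma) estimate for the conditional configuration of the remaining points, then optimizing the resulting recursion --- this is the Dvoretzky--Kiefer--Wolfowitz argument as sharpened by Massart to yield exactly the constant $2$ in the exponent.

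The hard part will be precisely this last step: one cannot afford a naive union bound over the $n$ thresholds $j\in[n]$ (equivalently, the jump points of $\hat G_m$), since that would cost a spurious factor of $n$ in the probability, i.e.\ an additive $\log n/\ve^2$ term in the sample complexity --- and eliminating exactly this loss is the whole point of the DKW inequality. Genuinely controlling the \emph{supremum} of the empirical process, rather than its pointwise fluctuations, is where all the difficulty concentrates, and extracting the optimal constant requires the delicate recursion above. If the sharp constant were not needed, I would instead use symmetrization plus the fact that the class of threshold indicators $\{\mathbb{I}[\cdot\leq j]\}$ has VC dimension $1$ to get a bound of the form $C e^{-c m\ve^2}$ by a much shorter argument, which already suffices for every application in this paper.

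Finally, the ``in particular'' clause is immediate: solving $2e^{-2m\ve^2}\leq\d$ gives $m\geq\frac{1}{2\ve^2}\ln(2/\d)$, so any $m=\Omega(\log(1/\d)/\ve^2)$ works.
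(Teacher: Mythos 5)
The paper does not prove this lemma; it cites it directly from Dvoretzky--Kiefer--Wolfowitz and Massart, so there is no in-paper proof to compare against. Your sketch is, however, a faithful outline of the standard proof from the literature. The quantile-transform reduction to the uniform empirical process is correct (and correctly uses the generalized inverse $\phi(u)=\min\{j : F(j)\ge u\}$ so that the pushed-forward law is exactly $p$ even when $F$ has flat or jumping parts), the reduction from two-sided to one-sided with a union-bound factor of $2$ is exactly how Massart organizes the argument, and the remark that $\sup_u(\hat G_m(u)-u)=\max_k(k/m - U_{(k)})$ correctly isolates the finite-dimensional object one must control. You also correctly identify where the real content lies: avoiding a union bound over thresholds (which would reintroduce a $\log n$ loss the lemma is designed to kill), and the ballot-type recursion needed to extract the sharp constant $2$ rather than merely $Ce^{-cm\ve^2}$. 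Your observation that the VC-dimension-$1$ symmetrization route gives a weaker constant that would still suffice for every use in this paper is accurate and worth keeping. The ``in particular'' arithmetic, $m\ge\frac{1}{2\ve^2}\ln(2/\d)$, is correct.

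One small point of caution if you were to flesh this out: the step ``conditioning on the index $k$ of the first crossing and using exchangeability\dots then optimizing the resulting recursion'' is where all the technical weight sits and is nontrivial to make rigorous with the constant $2$; Massart's paper spends its entire length there. As written, your sketch correctly frames the difficulty but does not discharge it, which is fine for a citation-level lemma but would not stand alone as a complete proof.
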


From this, we have the following corollary:
\begin{corollary}
\label{cor:linf}
Let $\hat p_m$ and $\hat q_m$ be the empirical distributions generated by $m = \Omega(\log(1/\d)/\ve^2)$ i.i.d.\ samples from distributions $p$ and $q$.
Then the following occurs with probability at least $1 - \delta$.
If there exists some $i$ such that $|p(i) - q(i)| \geq \ve$, then $|\hat p_m(i) - \hat q_m(i)| \geq 3\ve/4$.
On the other hand, if $p = q$, then for all $i \in [n]$, $|\hat p_m(i) - \hat q_m(i)| \leq \ve/4$.
\end{corollary}
\begin{proof}
Since $\ell_\infty(p,q) \leq 2\dk(p,q)$, Lemma~\ref{lem:dkw} implies that $\ell_\infty(p,\hat p_m) \leq \frac{\ve}{10}$ and $\ell_\infty(q, \hat q_m) \leq \frac{\ve}{10}$.
Both cases follow by triangle inequality.
\end{proof}

We will frequently use $z = (p - q)/\ve$ to denote the ``noise vector'' between $p$ and $q$, and $\bar p = (p + q)/2$.
While the two cases in distribution testing that one considers are usually $p = q$ and $\dtv(p,q) \geq \ve$, for convenience of notation, we will generally assume the latter case to be $\dtv(p,q) = \ve$ -- it is not hard to see that our analysis carries through whenever the algorithm is given a parameter $\ve$ which is less than the true total variation distance between $p$ and $q$.
With this in mind, when $p = q$, we have that $z = \vec 0$, and when $\dtv(p,q) = \ve$, we have that $\|z\|_1 = 2$ and $\sum_{i \in [n]} z(i) = 0$.
Let $z^+$ denote the ``rectified'' version of $z$, where $z^+(i) = \max(0, z(i))$ -- here, in the latter case, $\|z^+\|_1 = \sum_{i \in [n]} z^+(i) = 1$. 
$z^-(i) = \max(0,-z(i))$ is defined similarly.

We will use $\log$ to refer to the logarithm with base $2$ throughout this paper.

For our analysis, we will group indices into bins:

\begin{definition}
\label{def:bin}
The $j$-th bin for a vector $x$, denoted by $\Bin_j(x)$, contains all indices whose values are in the range $[2^{-j}, 2^{-j+1})$, i.e. $\Bin_j(x) \triangleq \{i\ :\ \frac{1}{2^{j}} \leq x(i) < \frac{1}{2^{j-1}}\}$.
\end{definition}

\section{Algorithm}
\label{sec:algorithm}
Our algorithm, \ANACONDA, is presented in Algorithm~\ref{alg:anaconda}. 
While it is phrased in terms of equivalence testing, it still works when a distribution $q$ is explicitly given (i.e., identity testing), as one can simply simulate \NACOND queries to $q$.
It takes three parameters, $T$, $m$, and $\varepsilon'$, which we will instantiate differently (as required by our analysis) for uniformity and equivalence testing.

The algorithm's behavior can roughly be summarized as follows. 
The algorithm first chooses a random size for a query set.
It then chooses a random subset of the domain of this size.
Next, it draws several conditional samples from this set, from both $p$ and $q$.
Finally, if it detects that a single element from the query set has a significantly discrepant probability mass under $p$ and $q$, it outputs that the two distributions are far.
It repeats this process several times, eventually outputting that the distributions are equal if it never discovers a discrepant element.

\begin{algorithm}[htb]
\begin{algorithmic}[1]
\Function{\ANACONDA}{$\ve$, $\NACOND_p$ oracle, $\NACOND_q$ oracle, parameters $T, m, \ve'$}
\For{$t = 1$ to $T$}
\State Choose an integer $j \in \{1,\dots, 2\log n\}$ uniformly at random, and define $r \triangleq 2^j$. \label{ln:select-r}
\State Choose a random set $S \subseteq [n]$, independently selecting each $i$ to be in $S$ with probability $1/r$. \label{ln:select-S}
\State Perform $m$ queries to $\NACOND_p$ and $\NACOND_q$ on the set $S$.
\State Using these queries, form the empirical distribution $\hat p_S$ and $\hat q_S$.
\If {$\exists i \in S$ such that $|\hat p_S(i) - \hat q_S(i)| \geq \ve'$} \label{ln:dkw}
\State \Return $\dtv(p,q) \geq \ve$
\EndIf
\EndFor
\State \Return $p = q$ \label{ln:p-eq-q}
\EndFunction
\end{algorithmic}
\caption{\ANACONDA: An algorithm for testing equivalence given \NACOND oracle access to $p, q$}
\label{alg:anaconda}
\end{algorithm}

\noindent Analyzing the query complexity of this algorithm is straightforward.
\begin{fact}
The query complexity of \ANACONDA is $O(Tm)$.
\end{fact}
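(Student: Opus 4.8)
The plan is to directly inspect Algorithm~\ref{alg:anaconda} and count the oracle calls. The only line that issues queries to the \NACOND oracles is the one performing $m$ queries to $\NACOND_p$ and $m$ queries to $\NACOND_q$ on the chosen set $S$; every other operation in an iteration (sampling the integer $j$ and hence $r$, drawing the random subset $S \subseteq [n]$, forming the empirical conditional distributions $\hat p_S$ and $\hat q_S$ from the already-drawn samples, and scanning for a symbol $i \in S$ with $|\hat p_S(i) - \hat q_S(i)| \geq \ve'$) is a purely internal computation that touches neither oracle. Since the \textbf{for} loop runs at most $T$ times — it may terminate earlier if a discrepant symbol is found via line~\ref{ln:dkw}, but it never runs longer — the total number of queries issued is at most $2Tm$ across the two oracles, i.e.\ at most $Tm$ per distribution, which is $O(Tm)$.

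One mild point deserves comment: as written, the algorithm inspects the outcomes of its queries (line~\ref{ln:dkw}) before deciding whether to continue, which is formally adaptive behavior. This is irrelevant both for the query bound and for non-adaptivity, and I would note it explicitly to be safe. One can equivalently fix, in advance, all $T$ pairs $(r,S)$, submit all $Tm$ query sets (each a copy of the corresponding $S$) to each oracle in one non-adaptive batch, and only afterwards perform the post-processing that checks each block of $m$ samples for a discrepant symbol and reports accordingly. This reorganized procedure makes exactly $Tm$ queries to each oracle regardless of the samples observed, so it is a genuine \NACOND algorithm with query complexity $O(Tm)$. There is no real obstacle in the argument; the statement is an immediate consequence of the structure of the algorithm.
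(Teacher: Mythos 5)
Your proof is correct and matches the paper's (implicit) reasoning: the paper states this Fact without proof as a straightforward count of $m$ oracle calls per distribution in each of $T$ loop iterations, which is exactly your tally. Your side remark about early termination being resolvable by pre-committing all $T$ pairs $(r,S)$ and batching the queries is a sound and worthwhile clarification of why the algorithm is genuinely non-adaptive, though it is not needed for the $O(Tm)$ bound itself.
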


\section{Analysis for Uniformity Testing}
\label{sec:uniform}

In this section, we will prove Theorem~\ref{thm:uniform} by instantiating \ANACONDA{} with parameters $T = \Theta(\log n)$, $m = \Theta(\log \log n / \ve^2)$, and $\ve' = \Theta(\ve)$. 

Our strategy will be as follows.
We will argue that, with probability $\Omega(1/\log n)$, \ANACONDA{} will select a set $S$ with a single element that has significantly different mass under the uniform distribution and the distribution $p_S$. 
In this way, we will reduce the problem from $\ell_1$-testing to $\ell_\infty$-testing, the latter of which is solvable with very few samples, by Corollary~\ref{cor:linf}.

More precisely, we compare the probability assigned to a particular symbol $i$ when performing a conditional sample on $S$, in the two cases where $p = \mathcal{U}_n$, and when $\dtv(p,\mathcal{U}_n) = \ve$.
In the former case, the probability is $\frac{\mathcal{U}_n(i)}{\mathcal{U}_n(S)}$, while in the latter, it is $\frac{\mathcal{U}_n(i) + \ve z(i)}{\mathcal{U}_n(S) + \ve z(S)}$.
Therefore, the difference in probability assigned is 
\begin{equation}
\left|\frac{\mathcal{U}_n(i) + \ve z(i)}{\mathcal{U}_n(S) + \ve z(S)} - \frac{\mathcal{U}_n(i)}{\mathcal{U}_n(S)}\right| . \label{eq:discrepancy}
\end{equation}

In the following two subsections, we will show the following lemma:
\begin{lemma}
\label{lem:discrepant-set}
If $\dtv(p,\mathcal{U}_n) = \ve$, then for each $t$, \ANACONDA{} will select a set $S$ which causes (\ref{eq:discrepancy}) to be $\geq \Omega(\ve)$ with probability $\geq \Omega(1/\log n)$.
\end{lemma}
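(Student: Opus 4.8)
The plan is to analyze the discrepancy expression~\eqref{eq:discrepancy} by binning the elements of the noise vector $z$ according to the magnitude of their entries, and to show that with constant probability over the random set $S$, the chosen $r$ ``hits'' the right scale. First I would rewrite~\eqref{eq:discrepancy}: since $\mathcal{U}_n(i) = 1/n$, and writing $s = |S|$, $\mathcal{U}_n(S) = s/n$, the expression becomes
\[
\left|\frac{1/n + \ve z(i)}{s/n + \ve z(S)} - \frac{1/n}{s/n}\right| = \frac{1}{s}\cdot\frac{\left|\ve z(i) - \frac{1}{n}\cdot \ve n z(S)/s\right|}{1 + \ve n z(S)/s} = \frac{\ve}{s}\cdot\frac{\left|z(i) - z(S)/s\right|}{1 + \ve n z(S)/s}.
\]
So the quantity we want to make $\Omega(\ve)$ is essentially $\frac{|z(i) - z(S)/s|}{s \cdot (1 + \ve n z(S)/s)}$. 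The heart of the argument is to find a scale $r = 2^j$ such that, when $S$ includes each element independently with probability $1/r$: (i) $|S| = s = \Theta(r)$ with good probability (Chernoff, valid when $r$ is not too close to $n$); (ii) $S$ contains exactly one ``heavy'' element $i$ from an appropriate bin $\Bin_k(z^+)$ (or $\Bin_k(z^-)$) whose value $|z(i)|$ is comparable to $1/r$ or larger, so that $|z(i)|$ alone contributes $\Omega(1/r)$ to the numerator; and (iii) the ``background'' term $z(S)/s$ together with the denominator factor $1 + \ve n z(S)/s$ does not kill the bound — i.e., the mass of the rest of $S$ is controlled.

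The key structural input is Lemma~\ref{lem:uniform-key} (referenced in the intro as giving probability $\Omega(1/\log n)$ of selecting a good set), which I expect does the following: since $\|z^+\|_1 = 1$ and $z^+$ is supported on $[n]$, by a dyadic/pigeonhole argument there is a bin $\Bin_k(z^+)$ of ``weight'' $\Omega(1/\log n)$ — that is, the total $z^+$-mass in that bin is $\Omega(1/\log n)$, since there are at most $O(\log n)$ relevant bins (values range from roughly $1/n$ up to $1$) and they must sum to $1$. Actually for the $\Omega(1/\log n)$ probability we want a bin where the \emph{number of elements} times their typical value is $\Omega(1/\log n)$; the number of elements in $\Bin_k(z^+)$ is $N_k$ with $N_k \cdot 2^{-k} = \Theta(\text{weight of bin } k)$. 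We choose $j$ so that $r = 2^j \approx N_k$ — i.e. we guess the set size to be the reciprocal of the density of heavy elements. Then with probability $\Omega(1)$ the set $S$ contains exactly one element $i$ of $\Bin_k(z^+)$ (this is the ``balls in bins'' / Poisson-ish event: expected number in $S$ from this bin is $N_k / r = \Theta(1)$, so ``exactly one'' has constant probability). Combined with the $1/(2\log n)$ probability of guessing the right $j$, and the $\Omega(1/\log n)$-weight bin, we'd actually only need one factor of $1/\log n$ — the bin-weight lower bound of $\Omega(1/\log n)$ is what forces $N_k 2^{-k} = \Omega(1/\log n)$, hence $z(i) \approx 2^{-k} = \Omega(1/(N_k \log n)) = \Omega(1/(r\log n))$...

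Here I see the main obstacle, and it's exactly the point the paper flags (``a more careful analysis is required to avoid paying excess factors of $\log n$''): naively we lose a $\log n$ from the bin weight being only $1/\log n$, which would make $z(i) \approx 1/(r \log n)$ and the discrepancy $\approx \ve/(r \cdot r\log n)\cdot$something — too small. The resolution must be that we don't need the value $z(i)$ to be as large as $1/r$; we need $|z(i)| \cdot (n/s)$, i.e., $|z(i)|$ relative to the \emph{uniform} mass $1/n$ of a single element, wait — let me reconsider the scaling. Going back: the discrepancy is $\frac{\ve}{s}\cdot\frac{|z(i) - z(S)/s|}{1 + \ve n z(S)/s}$. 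If $z(i) = \Theta(1/r)$ and the background is benign, this is $\Theta(\ve/r \cdot 1/r) = \Theta(\ve/r^2)$, which is NOT $\Omega(\ve)$ unless $r = O(1)$. So something is off in my reduction — the correct reading must be that $z(i)$ can be as large as a constant (not $1/r$), OR that $z(S)$ scales like $s \cdot (1/n)$ so that $\ve n z(S)/s = \Theta(\ve)$ and the conditional distribution genuinely amplifies. Let me redo it cleanly: $z(i)$ is the noise on a single symbol, and since $\|z\|_1 = 2$, individual entries $z(i)$ can be anywhere from $\approx 1/n$ (comparable to uniform) up to $\Theta(1)$ (if a symbol has constant excess mass). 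The discrepancy~\eqref{eq:discrepancy} rearranges to $\frac{|z(i)/n^{-1} - z(S)/\mathcal{U}_n(S)\cdot \text{stuff}|}{\ldots}$; more precisely dividing num and denom of the first fraction in~\eqref{eq:discrepancy} by $\mathcal{U}_n(S)$ gives $\left|\frac{\mathcal{U}_n(i)/\mathcal{U}_n(S) + \ve z(i)/\mathcal{U}_n(S)}{1 + \ve z(S)/\mathcal{U}_n(S)} - \frac{\mathcal{U}_n(i)}{\mathcal{U}_n(S)}\right|$, and with $\mathcal{U}_n(i)/\mathcal{U}_n(S) = 1/s$, this is $\frac{1}{s}\left|\frac{1 + \ve s z(i) n / s}{1 + \ve n z(S)/s}\cdots\right|$ — the amplification factor is $n$: the relevant quantity is $\ve n z(i)$, i.e. the noise measured in units of the uniform mass $1/n$. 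So I would define the bins on the rescaled vector $n z^+$ (entries in $[1/\ve', \text{large}]$ roughly, but really ranging over $O(\log n)$ dyadic scales since $n z^+(i) \in [\Theta(1), n]$ coarsely, and $\sum n z^+(i) = n$). The argument: pick the dyadic scale $k$ where bin $\Bin_k(nz^+)$ has $\Omega(n/\log n)$ total mass (exists by pigeonhole over $\leq 2\log n$ bins summing to $n$); elements there have value $nz^+(i) \approx 2^{-k+1}\cdot$... so there are $N_k \approx (n/\log n)/2^{\text{scale}}$ of them; choose $r \approx n/N_k$ so $S$ contains $\Theta(1)$ such elements and, with constant probability, exactly one, call it $i$. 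Then $\ve n z(i) = \Theta(\ve \cdot \text{scale})$ while the background $\ve n z(S)/s$ needs to be shown $O(1)$ in magnitude or at least not to cancel $z(i)$ — this is where concentration on $z(S)$ over the random $S$ (mentioned in the paper: ``concentration arguments on the total discrepancy and probability mass'') enters, plus the special-case handling when $r$ is forced small. I expect the genuinely hard step to be controlling the denominator/background term $z(S)/s$ and showing it doesn't conspire to cancel the contribution of the single heavy element $i$ — in particular ruling out that $z$ on $S \setminus \{i\}$ has a correlated sign that flattens the conditional distribution — and simultaneously keeping every probability bound at $\Omega(1/\log n)$ rather than $\Omega(1/\log^2 n)$, which requires spending the single available $\log n$ factor only once (on the pigeonhole bin-selection) and getting constant probability from each of the set-size concentration, the ``exactly one heavy element'' event, and the background-control event.
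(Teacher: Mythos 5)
Your setup is sound: you correctly identify that the quantity to amplify is $\ve n z(i)$ measured against $\mathcal{U}_n(S)$, that the mechanism is to hit a single discrepant $i$ with $z(i) \gtrsim 1/r$, and that a naive pigeonhole over bins of $z^+$ threatens to cost an extra factor of $\log n$. But your proposed resolution of that extra factor does not work, and the paper's actual resolution is a concretely different idea you haven't supplied.

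Here is the problem with your fix. You propose to ``spend the single available $\log n$ factor only once (on the pigeonhole bin-selection),'' but the pigeonhole is a deterministic statement, not a random event -- it costs nothing. The $\Omega(1/\log n)$ must come from the algorithm's uniformly random choice of $j$, and having fixed $j$, one then has to ask: with what probability does $S$ contain an $i$ with $z^+(i) \ge 1/r = 2^{-j}$? If you tune $r$ to the bin \emph{size} $N_k$ (as you do, to get $\Theta(1)$ expected heavy elements in $S$), then $z(i) \approx 2^{-k} = w_k / N_k \approx 1/(r\log n)$ when $w_k \approx 1/\log n$, which kills the discrepancy by a $\log n$ factor -- exactly the difficulty you flag. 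If you instead tune $r$ to the bin \emph{scale} $2^k$ so that $z^+(i) \ge 1/r$ automatically, then the conditional probability of hitting that bin is $\Theta(N_k/2^k) = \Theta(w_k) = \Theta(1/\log n)$, and together with the $1/\log n$ from guessing $j=k$ you get $\Omega(1/\log^2 n)$. Either way one factor of $\log n$ is lost, and your sketch does not supply the mechanism that removes it. What Lemma~\ref{lem:uniform-key} actually does is commit to $r=2^j$ (scale-matching, so the ``$z^+(i)\ge 1/r$'' condition is free) and then \emph{sum the success probability over all $j$}: $\frac{1}{\log(n/32)}\sum_j \big(1-(1-2^{-j})^{|\Bin_j(z^+)|}\big) \ge \frac{1}{4\log(n/32)}\sum_j\sum_{i\in\Bin_j(z^+)}z^+(i) = \Omega(1/\log n)$, because the bin weights sum to a constant. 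No single bin needs to be heavy; the algorithm's random $j$ effectively spreads the attempt over all bins at once. This sum-over-scales argument is the missing ingredient.

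A second, smaller gap: you gesture at ``special-case handling when $r$ is forced small'' but do not develop it, and it is not merely a technical corner. When essentially all of $z^+$ and $z^-$ is concentrated at scale $\Theta(1/n)$ (the paper's Case I), the scale-matched $r$ would be near $n$, $S$ has only $O(1)$ elements, Lemma~\ref{lem:rest-signal} (Chernoff on $\mathcal{U}_n(S\setminus\{i\})$) fails, and a single element with $z^+(i)\approx 1/n$ gives no amplification against a background of the same order. The paper's separate argument picks $|S|=2$ and requires $S$ to contain \emph{both} a positively and a negatively discrepant element -- the numerator $z(i_1)\mathcal{U}_n(i_2) - z(i_2)\mathcal{U}_n(i_1)$ is then a sum of like-sign terms of order $1/n^2$, which is what makes~(\ref{eq:discrepancy}) come out to $\Omega(\ve)$. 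This is a qualitatively different mechanism from the one-heavy-element argument. Finally, the background-control step you predict to be the ``genuinely hard step'' is in fact handled by a one-line Markov bound (Lemma~\ref{lem:rest-noise}, using that negative entries of $z$ only help); the genuinely hard step is the sum-over-scales trick above.
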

This lemma exposes an interesting structural fact about probability distributions: if a distribution is far from uniform in $\ell_1$-distance, then conditioning on a random set of a random size is likely to give a distribution which is far from uniform in $\ell_\infty$-distance.
One technicality that arises when taking conditional distributions is that the normalization factors might be different.
While not directly applicable to our work, we prove a similar result based on the size of the set in Section~\ref{sec:standalone}, which may be of independent interest.

Assuming this lemma to be true for the moment, we will show how to complete the proof.
Repeating this process $T = \Theta(\log n)$ times will guarantee that at least one iteration will choose an $S$ containing a sufficiently discrepant element with probability $\geq 9/10$.
We focus on the iteration where such an $S$ is selected.

Now if we draw $\Theta(\log \log n/\ve^2)$ samples from $p_S$, Corollary~\ref{cor:linf} implies that the empirical distribution $\hat p_S$ will approximate $p_S$ in $\ell_\infty$ up to an additive $\ve'$, with probability at least $1 - O\left(\frac{1}{\log n}\right)$, and thus Line~\ref{ln:dkw} will correctly identify that $\dtv(p,\mathcal{U}_n) = \ve$.
Therefore, with probability at least $4/5$, the algorithm will correctly detect in this case that $\dtv(p, \mathcal{U}_n) = \ve$.

We now examine what happens when $p = \mathcal{U}_n$.
For each iteration $t$, the uniform distribution on $S$ and $p_S$ will be equal.
We again invoke Corollary~\ref{cor:linf} with $\Theta(\log \log n /\ve^2)$ samples, and use a union bound over all $T = \Theta(\log n)$ iterations.
This implies that, with probability at least $9/10$, Line~\ref{ln:dkw} will never identify an element which has $\geq \ve'$ discrepancy, and thus the algorithm will output that $p = \mathcal{U}_n$ in Line~\ref{ln:p-eq-q}.

It remains to prove Lemma~\ref{lem:discrepant-set}.
We break the analysis into two cases, which we address in the following two subsections.
In Section~\ref{sec:many-small}, we handle the case where, for all $x \in \{z^-,z^+\}$, $\sum_{j = \log (n/32) + 1}^{\log 5n} \sum_{i \in \Bin_j(x)} x(i) \geq 1/5$.
This corresponds to the case where there are many symbols with small discrepancy from the uniform distribution, in both the positive and negative direction.
In Section~\ref{sec:few-small}, we handle the complement of this case, where there exists an $x \in \{z^-,z^+\}$ for which $\sum_{j = 1}^{\log (n/32)} \sum_{i \in \Bin_j(x)} x(i) \geq 3/5$.
Roughly, this happens when there are not too many symbols which capture the discrepancy between the distributions.

Before we proceed, we note the following proposition relating the size of a bin to the mass it contains, which is immediate from Definition~\ref{def:bin}.
\begin{proposition}
\label{prop:bin-size}
  $2^{j-1} \sum_{i \in \Bin_j} x(i) \leq |\Bin_j(x)| \leq 2^j \sum_{i \in \Bin_j} x(i)$.
\end{proposition}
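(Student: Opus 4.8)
The statement is immediate from unpacking Definition~\ref{def:bin}, so the plan is simply to sum the defining inequalities over the bin. Recall that by definition every index $i \in \Bin_j(x)$ satisfies $2^{-j} \leq x(i) < 2^{-j+1}$. Summing this chain of inequalities over all $i \in \Bin_j(x)$ yields
\[
  |\Bin_j(x)| \cdot 2^{-j} \;\leq\; \sum_{i \in \Bin_j(x)} x(i) \;<\; |\Bin_j(x)| \cdot 2^{-j+1}.
\]
Rearranging the left-hand inequality gives $|\Bin_j(x)| \leq 2^j \sum_{i \in \Bin_j} x(i)$, which is the upper bound claimed. Rearranging the right-hand inequality gives $|\Bin_j(x)| > 2^{j-1}\sum_{i \in \Bin_j} x(i)$, which in particular implies the (non-strict) lower bound $2^{j-1}\sum_{i \in \Bin_j} x(i) \leq |\Bin_j(x)|$.

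There is no real obstacle here: the only thing to be slightly careful about is the degenerate case $\Bin_j(x) = \varnothing$, in which case both sides are $0$ and the inequalities hold trivially (one may note this in passing but it does not require separate treatment, since the empty sum is $0$). The proposition is stated precisely so that it can be invoked repeatedly in the later case analysis to convert between ``number of indices in a bin'' and ``mass contained in a bin,'' and the factor-of-two slack on each side comes exactly from the width of the dyadic interval $[2^{-j}, 2^{-j+1})$.
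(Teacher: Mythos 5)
Your proof is correct and is exactly the one-line unpacking of Definition~\ref{def:bin} that the paper has in mind when it calls the proposition ``immediate from Definition~\ref{def:bin}.'' Summing the defining inequalities $2^{-j} \leq x(i) < 2^{-j+1}$ over the bin and rearranging is the whole argument, and your remark about the empty-bin case and the strict-versus-non-strict inequality is accurate but indeed requires no separate treatment.
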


\subsection{Case I: Many Small Discrepancies}
\label{sec:many-small}
In this section, we prove Lemma~\ref{lem:discrepant-set} in the case where for all $x \in \{z^-,z^+\}$, $\sum_{j = \log (n/32) + 1}^{\log 5n} \sum_{i \in \Bin_j(x)} x(i) \geq 1/5$.
In short, the analysis can be summarized as follows: if the algorithm chooses a set $S$ of size $2$, it is likely to contain two elements with non-trivial discrepancy, and in both the positive and negative direction -- this will suffice to make (\ref{eq:discrepancy}) be $\geq \Omega(\ve)$.

Proposition~\ref{prop:bin-size} gives us the following lower bound on the number of symbols which are in bins $\log (n/32) + 1$ through $\log 5n$:
\begin{equation}
\sum_{j = \log (n/32) + 1}^{\log 5n} |\Bin_j(x)| \geq \sum_{j = \log (n/32) + 1}^{\log 5n} 2^{j-1} \sum_{i \in \Bin_j(x)} x(i) \geq \frac{n}{32} \sum_{j = \log (n/32) + 1}^{\log 5n} \sum_{i \in \Bin_j(x)} x(i) \geq \frac{n}{160}
\label{eq:many-small}
\end{equation}
In other words, for either $x \in \{z^-, z^+\}$, there are $\Omega(n)$ symbols with $x(i) \geq 1/5n$.

With probability $\frac{1}{2\log n}$, \ANACONDA{} will select $j = \log n$ in Line~\ref{ln:select-r}.
Conditioning on this, with constant probability, the set $S$ selected in Line~\ref{ln:select-S} will be of size exactly $2$.
Further conditioning on this, due to (\ref{eq:many-small}), with constant probability $S$ will consist of two symbols $i_1 \in \Bin_{j'}(z^+)$ and $i_2 \in \Bin_{j''}(z^-)$ for $\log (n/32) + 1 \leq j', j'' \leq \log 5n$.
To unpack a bit of the notation here: this implies that $z(i_1)$ and $z(i_2)$ will have opposite signs, and are of comparable magnitude.

Without loss of generality, suppose that $z(i_1) \geq 0$ and $z(i_2) \leq 0$.
Then (\ref{eq:discrepancy}) is the following:
\begin{equation}
\left|\frac{\mathcal{U}_n(i_1) + \ve z(i_1)}{\mathcal{U}_n(S) + \ve z(S)} - \frac{\mathcal{U}_n(i_1)}{\mathcal{U}_n(S)}\right| = \frac{\ve n (z(i_i) - z(i_2))}{2(2 + \ve n (z(i_1) + z(i_2)))} \geq \frac{\ve n \cdot \frac{2}{5n}}{2(2 + \ve n \cdot \frac{32}{n})} \geq \frac{\ve}{68}.
\end{equation}
This expression is $\Omega(\ve)$, and this event happens with probability $\geq \Omega(1/\log n)$, thus proving Lemma~\ref{lem:discrepant-set} in this case.

\subsection{Case II: Not So Many Small Discrepancies}
\label{sec:few-small}
In this section, we prove Lemma~\ref{lem:discrepant-set} in the case where there exists an $x \in \{z^-,z^+\}$ for which $\sum_{j = 1}^{\log (n/32)} \sum_{i \in \Bin_j(x)} x(i) \geq 3/5$.
Without loss of generality, assume that this holds for $z^+$.
Furthermore, we focus our analysis on the case where \ANACONDA{} picks an $j \leq \log (n/32)$.
For the remainder of this proof we condition on this event, which happens with probability at least $1/4$.

We will need the following key lemma:
\begin{lemma}
\label{lem:uniform-key}
Suppose $\dtv(p,\mathcal{U}_n) = \varepsilon$.
For each iteration $t$, with probability $\geq \frac{3}{20 \log (n/32)}$, the algorithm will choose an $r$ and a set $S$ such that there exists $i \in S$ with $z^+(i) \geq 1/r$.  
\end{lemma}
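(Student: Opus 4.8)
The plan is to show that a random choice of $r = 2^j$ (conditioned on $j \le \log(n/32)$) together with a random set $S$ of inclusion probability $1/r$ catches, with decent probability, some index $i$ with $z^+(i) \ge 1/r$. The natural approach is a first-moment / bin-by-bin argument over $j = 1, \dots, \log(n/32)$. For a fixed target bin index $j$, consider the event that $S$ (built with inclusion probability $1/r = 2^{-j}$) contains at least one element of $\Bin_j(z^+)$. If $i \in \Bin_j(z^+)$ then $z^+(i) \ge 2^{-j} = 1/r$, so this event is exactly what we want; crucially, the set-size parameter $r$ and the bin index $j$ are \emph{matched}, which is why we needed the random ``guess'' for $r$ in the algorithm.

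First I would lower-bound $\Pr[S \cap \Bin_j(z^+) \ne \emptyset]$. With each element included independently with probability $2^{-j}$, this probability is $1 - (1 - 2^{-j})^{|\Bin_j(z^+)|}$. Using Proposition~\ref{prop:bin-size}, $|\Bin_j(z^+)| \ge 2^{j-1}\sum_{i \in \Bin_j(z^+)} z^+(i)$, so writing $w_j \triangleq \sum_{i \in \Bin_j(z^+)} z^+(i)$ for the mass of bin $j$, we get that the expected number of hit elements of $\Bin_j(z^+)$ is $\ge 2^{-j}\cdot 2^{j-1} w_j = w_j/2$. Since $1 - (1-x)^k \ge 1 - e^{-xk} \ge (1-e^{-1})\min(1, xk)$ type bounds hold, the hit probability is $\gtrsim \min(1, w_j)$, and since each $w_j \le 1$ this is $\gtrsim w_j$. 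Then I would condition on the algorithm having selected this particular $j$, which happens with probability $1/\log(n/32)$ given that $j \le \log(n/32)$. Summing (after taking a union-ish view — actually just picking the best $j$, or averaging) over $j = 1, \dots, \log(n/32)$ and using the case hypothesis $\sum_{j=1}^{\log(n/32)} w_j \ge 3/5$, the total probability over the random choice of $j$ and $S$ is at least $\frac{1}{\log(n/32)} \cdot c \cdot \sum_j w_j \ge \frac{c \cdot 3/5}{\log(n/32)}$ for the absolute constant $c = 1 - e^{-1} \approx 0.63$, which should come out to at least $\frac{3}{20\log(n/32)}$ after plugging in constants. (One must be slightly careful: the bound $1-(1-2^{-j})^{|\Bin_j|}\ge c\min(1,w_j)$ needs a clean justification when $|\Bin_j|=0$, trivially fine, and the constant $c$ chosen so $3c/5 \ge 3/20$, i.e.\ $c \ge 1/4$, which holds comfortably.)

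The main obstacle I anticipate is getting the constants to line up cleanly in the concentration/hitting step — in particular, the inequality $1 - (1 - 2^{-j})^{|\Bin_j(z^+)|} \ge \tfrac14 \min(1, w_j)$ needs the right form of the elementary bound $(1-x)^k \le e^{-xk}$ combined with $1 - e^{-y} \ge (1-e^{-1})\min(1,y)$ for $y \ge 0$, and then absorbing the fact that $|\Bin_j(z^+)|$ could be as small as $2^{j-1}w_j$ or as large as $2^j w_j$ (the lower bound is the one that matters here). A secondary subtlety: strictly we want a single $(j, S)$ pair witnessing the event, and since the events ``$j$ was chosen $= j_0$'' are disjoint across $j_0$, summing probabilities is legitimate rather than needing inclusion-exclusion — so $\Pr[\exists i \in S: z^+(i) \ge 1/r] = \sum_{j_0} \Pr[j = j_0]\Pr[S \cap \Bin_{j_0}(z^+) \ne \emptyset] \ge \frac{1}{\log(n/32)}\sum_{j_0} c\, w_{j_0} \ge \frac{3c}{5\log(n/32)} \ge \frac{3}{20\log(n/32)}$. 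I would double-check that the bins $j = 1,\dots,\log(n/32)$ indeed all correspond to valid choices $r = 2^j \le n/32 \le n$, so the algorithm can actually pick each such $j$ in Line~\ref{ln:select-r}, which it can since $j$ ranges over $\{1,\dots,2\log n\}$.
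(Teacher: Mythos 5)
Your proof is correct and follows essentially the same bin-by-bin argument as the paper: for each $j$, lower-bound the hitting probability via $1-(1-2^{-j})^{|\Bin_j(z^+)|} \geq 1 - e^{-w_j/2}$ (using Proposition~\ref{prop:bin-size}), apply an elementary bound on $1-e^{-y}$, and sum over $j$ using the case hypothesis $\sum_j w_j \geq 3/5$ (the paper uses $1-e^{-x}\geq x/2$ where you use $1-e^{-y}\geq(1-e^{-1})\min(1,y)$, a cosmetic difference). One small slip to flag: because of the factor $1/2$ inherited from Proposition~\ref{prop:bin-size}, the effective constant in ``hit probability $\geq c\, w_j$'' is $c=(1-e^{-1})/2\approx 0.32$, not $1-e^{-1}$; but as you note any $c\geq 1/4$ suffices, so the stated bound $\frac{3}{20\log(n/32)}$ still goes through.
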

\begin{proof}
For some fixed $j$, the probability of choosing $j$ is $\frac{1}{\log (n/32)}$, and, conditioning on this $j$, the probability of picking any element from $\Bin_j(z^+)$ to be in $S$ is $1 - \left(1 - \frac{1}{2^j}\right)^{|\Bin_j(z^+)|}$.
By the law of total probability, we sum this over all bins to get the probability that the event of interest happens:
\begin{align} \frac{1}{\log (n/32)} \sum_{j \in [\log (n/32)]} 1 - \left(1 - \frac{1}{2^j}\right)^{|\Bin_j(z^+)|} 
&\geq \frac{1}{\log (n/32)} \sum_{j \in [\log (n/32)]} 1 - \exp\left(-\frac{|\Bin_j(z^+)|}{2^j}\right) \label{eq:noiselem1} \\
&\geq \frac{1}{\log (n/32)} \sum_{j \in [\log (n/32)]} 1 - \exp\left(-\frac12\sum_{i \in \Bin_j(z^+)} z^+(i) \right) \label{eq:noiselem2} \\
&\geq \frac{1}{\log (n/32)} \sum_{j \in [\log (n/32)]} \frac14\sum_{i \in \Bin_j(z^+)} z^+(i) \label{eq:noiselem3} \\
&\geq \frac{3}{20 \log (n/32)}. \label{eq:noiselem4}
\end{align}
(\ref{eq:noiselem1}) follows from the inequality $1 - x \leq \exp(-x)$, (\ref{eq:noiselem2}) is due to Proposition~\ref{prop:bin-size}, (\ref{eq:noiselem3}) is by the inequality $1 - \exp(-x) \geq x/2$ (which holds for all $x \in [0,1]$), and (\ref{eq:noiselem4}) is by assumption.
\end{proof}

We will require the following lemmata to complete the proof:
\begin{lemma}
\label{lem:rest-signal}
For all $n$ greater than some absolute constant, for any $i \in [n]$ and $j \leq \log (n/32)$, $$\Pr\left[\frac{1}{2\cdot 2^j} \leq \mathcal{U}_n\left(S \setminus \{i\}\right) \leq \frac{3}{2 \cdot 2^j}\right] \geq 1 - 2/e^2.$$
\end{lemma}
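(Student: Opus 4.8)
The plan is to reduce the statement to an elementary two-sided concentration bound for a binomial random variable. Fix $i$ and $j \le \log(n/32)$ and write $r = 2^j$. Since $\mathcal{U}_n$ is uniform, $\mathcal{U}_n(S\setminus\{i\}) = |S\setminus\{i\}|/n$, and because each element of $[n]\setminus\{i\}$ is placed in $S$ independently with probability $1/r$, the quantity $X := |S\setminus\{i\}|$ is distributed as $\text{Binomial}(n-1, 1/r)$ --- in particular its law does not depend on $i$, which gives the ``for any $i$'' uniformity for free. With $\mu := \E[X] = (n-1)/r$, the target event is exactly $\{\,n/(2r) \le X \le 3n/(2r)\,\}$, whose endpoints can be rewritten as $\tfrac{\mu}{2}\bigl(1+\tfrac1{n-1}\bigr)$ and $\tfrac{3\mu}{2}\bigl(1+\tfrac1{n-1}\bigr)$.

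The crucial observation is that $j \le \log(n/32)$ forces $r \le n/32$, hence $\mu \ge 32(n-1)/n$, which exceeds any fixed constant below $32$ once $n$ is large enough; moreover larger $j$ only increases $\mu$, so the worst case is $j$ as large as permitted, where $\mu \approx 32$. I would then bound the two failure events separately using the multiplicative Chernoff bound. For the upper tail it suffices to control $\Pr[X \ge \tfrac32\mu]$ (the true endpoint is only larger), which is at most $\exp(-\mu/12)$, and this is $\le e^{-2}$ once $\mu \ge 24$. For the lower tail, using $1+\tfrac1{n-1} \le \tfrac{33}{32}$ for $n \ge 33$, the lower endpoint is at most $(1-\tfrac{31}{64})\mu$, so $\Pr\bigl[X \le \tfrac{33}{64}\mu\bigr] \le \exp\!\bigl(-\tfrac12(\tfrac{31}{64})^2\mu\bigr) \le e^{-2}$ for $\mu$ above a small absolute constant. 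A union bound over the two tails gives failure probability at most $2e^{-2}$, i.e.\ the claimed lower bound $1 - 2/e^2$, for all $n$ exceeding a fixed constant.

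I do not anticipate any real obstacle here: the content is a routine Chernoff argument, and the only care needed is the bookkeeping of the additive $\pm1$ correction created by deleting the single element $i$ (the $n-1$ versus $n$, equivalently the factor $1+\tfrac1{n-1}$ appearing on both endpoints) and the choice of Chernoff constants so that both tails clear the $e^{-2}$ threshold simultaneously for every admissible $j$. Since the binding regime is the largest admissible $j$, where $\mu$ is essentially the absolute constant $32$, a single lower bound on $n$ suffices --- which is precisely the hypothesis ``$n$ greater than some absolute constant'' in the statement.
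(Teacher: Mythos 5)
Your proposal is correct and follows essentially the same approach as the paper's proof: both reduce $\mathcal{U}_n(S\setminus\{i\})$ to the binomial $|S\setminus\{i\}| \sim \mathrm{Binomial}(n-1, 2^{-j})$ with mean $\mu \geq 32(n-1)/n$, apply a Chernoff bound to both tails, and close with a union bound. The paper handles both tails symmetrically with a single multiplicative deviation $\delta = 7/16$ (giving $2\exp(-49\mu/768) \leq 2/e^2$), whereas you treat the two tails with different deviations to absorb the $1+\tfrac{1}{n-1}$ correction; this is only a bookkeeping difference and both are valid.
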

\begin{proof}
Observe that the size of $S\setminus \{i\}$ is a sum of $n-1$ i.i.d.\ Bernoulli random variables with parameter $1/2^j$, and thus has expectation $\mu = \frac{n-1}{2^j} \geq 32\frac{n-1}{n}$, where the inequality follows by the upper bound on $j$.
Then, by a Chernoff bound, we have 
$$\Pr\left[\frac{9}{16}\frac{(n-1)}{2^j} \leq |S \setminus \{i\}| \leq \frac{23}{16} \frac{(n-1)}{2^j}\right] \geq 1 - 2\exp\left(-\frac{49\mu}{768}\right) \geq 1 - 2\exp\left(-2.04\frac{n-1}{n}\right)\geq 1 - 2/e^2.$$
The final inequality follows for all $n$ larger than some absolute constant.
Note that this statement deals with the size of the set $S \setminus \{i\}$, whereas the lemma statement is concerned with the measure of the set under the uniform distribution: this is simply $\frac{|S \setminus \{i\}|}{n}$, so 
$$\Pr\left[\frac{9}{16}\frac{n-1}{n} \frac{1}{2^j} \leq \mathcal{U}_n\left(S \setminus \{i\}\right) \leq \frac{23}{16} \frac{n-1}{n} \frac{1}{ 2^j}\right] \geq 1 - 2/e^2.$$
Note that for $n$ larger than some absolute constant, $\frac{9}{16}\frac{n-1}{n} \geq 1/2$, as desired.
The bound on the other side follows similarly.
\end{proof}

\begin{lemma}
\label{lem:rest-noise}
If $\dtv(p, \mathcal{U}_n) = \ve$, then for any $i$ and $j$, $$\Pr\left[z\left(S \setminus \{i\}\right) \geq \frac{4}{2^j}\right] \leq 1/4.$$
\end{lemma}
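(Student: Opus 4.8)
The statement to prove is Lemma~\ref{lem:rest-noise}: conditioned on $\dtv(p,\mathcal U_n)=\ve$, for any fixed index $i$ and any $j$, $\Pr[z(S\setminus\{i\})\geq 4/2^j]\leq 1/4$. The natural route is a first-moment / Markov argument, but with care because $z(S\setminus\{i\})$ is \emph{signed} and its expectation over the random choice of $S$ is essentially $0$ (since $\sum_i z(i)=0$), so Markov does not apply directly. Instead I would bound $z(S\setminus\{i\})\leq z^+(S\setminus\{i\})\leq z^+(S)$ and argue that $z^+(S)$ is small with high probability. Here $z^+$ is a nonnegative vector with $\|z^+\|_1=1$ (recall $\|z\|_1=2$, $\sum z(i)=0$ in the far case), and each element of $[n]$ lands in $S$ independently with probability $1/r=1/2^j$. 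Hence $\E[z^+(S)] = \frac1{2^j}\|z^+\|_1 = \frac1{2^j}$.

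**Key steps, in order.** First, reduce to $z^+$: since $z(S\setminus\{i\}) = z^+(S\setminus\{i\}) - z^-(S\setminus\{i\}) \leq z^+(S\setminus\{i\}) \leq z^+(S)$, it suffices to show $\Pr[z^+(S)\geq 4/2^j]\leq 1/4$. Second, compute $\E[z^+(S)] = \sum_{i\in[n]} z^+(i)\cdot\Pr[i\in S] = \frac1{2^j}$. Third, apply Markov's inequality to the nonnegative random variable $z^+(S)$: $\Pr[z^+(S)\geq 4/2^j]\leq \frac{\E[z^+(S)]}{4/2^j} = \frac{1/2^j}{4/2^j} = 1/4$. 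Chaining the two inequalities gives $\Pr[z(S\setminus\{i\})\geq 4/2^j]\leq \Pr[z^+(S)\geq 4/2^j]\leq 1/4$, as claimed. Note the bound holds uniformly in $i$ and $j$, since dropping $i$ from the set only decreases $z^+(S\setminus\{i\})$ relative to $z^+(S)$, and the Markov step never used which $r=2^j$ was chosen.

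**Where the subtlety is.** There is no real obstacle here — the lemma is deliberately a crude tail bound used downstream to rule out the ``rest of the set carries a lot of positive noise'' event. The only thing to be careful about is the sign issue: one must pass to $z^+$ before invoking Markov, because $z(S\setminus\{i\})$ itself is not a nonnegative random variable and its mean is (nearly) zero, so a direct Markov bound would be vacuous or incorrect. A secondary point worth a sentence is why $\|z^+\|_1=1$: this follows from the normalization established in the preliminaries, namely that in the far case $\|z\|_1=2$ and $\sum_i z(i)=0$, so the positive and negative parts each have $\ell_1$ mass exactly $1$. With that in hand the computation of $\E[z^+(S)]$ and the Markov step are both one-liners, and the factor $4$ in the threshold is exactly the slack that makes the probability $1/4$.
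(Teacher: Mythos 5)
Your proof is correct and follows essentially the same route as the paper's: pass to the nonnegative variable $z^+(\cdot)$, bound its expectation by $\|z^+\|_1/2^j = 1/2^j$ using the normalization $\|z^+\|_1=1$, and apply Markov's inequality with the factor-of-$4$ slack. The only cosmetic difference is that you apply Markov to $z^+(S)$ rather than $z^+(S\setminus\{i\})$, but the chain $z(S\setminus\{i\})\le z^+(S\setminus\{i\})\le z^+(S)$ makes these interchangeable.
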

\begin{proof}
Note that $z^+(S \setminus \{i\})$ is a non-negative random variable.
Its expectation $\E[z^+(S \setminus \{i\})] \leq \E[z^+(S)] \leq 1/2^j$.
The lemma follows by Markov's inequality, and by observing that the addition of any negative elements of $z$ will only decrease $z(S \setminus \{i\})$. 
\end{proof}

Note that, by Lemmas~\ref{lem:uniform-key}, \ref{lem:rest-signal}, \ref{lem:rest-noise}, if $\dtv(p, \mathcal{U}_n) = \ve$, with probability at least $\frac14 \cdot \frac{1}{\log (n/32)}\cdot \left(1 - \frac14 - 2/e^2\right) \geq \Omega(1/\log n)$, the following events happen simultaneously for some $i \in S$:
\begin{itemize}
\item $r \leq n/32$;
\item $z(i) \geq 1/r$;
\item $\mathcal{U}_n(i) = 1/n$;
\item $z(S \setminus \{i\}) \leq 4/r$;
\item $\frac{1}{2r} \leq \mathcal{U}_n(S \setminus \{i\}) \leq \frac{3}{2r}$;
\end{itemize}

We now show that a set $S$ with all these properties will result in (\ref{eq:discrepancy}) being $\geq \Omega(\ve)$:
\begin{align*}
\left|\frac{\mathcal{U}_n(i) + \ve z(i)}{\mathcal{U}_n(S) + \ve z(S)} - \frac{\mathcal{U}_n(i)}{\mathcal{U}_n(S)}\right| &= \ve \left|\frac{z(i)\mathcal{U}_n(S \setminus \{i\}) - z(S\setminus \{i\})\mathcal{U}_n(i)}{\mathcal{U}_n(S)(\mathcal{U}_n(S) + \ve z(S))}\right| \\
&\geq \ve \cdot  \frac{1}{\mathcal{U}_n(S) (\mathcal{U}_n(S) + \ve z(S))} \left( \frac{z(i)}{2r} - \frac{4}{rn} \right) \\
&\geq \ve \cdot  \frac{r}{2} \frac{1}{\frac{2}{r} + \ve \left(\frac4{r} + z(i)\right)} \left( \frac{z(i)}{2r} - \frac{4}{rn} \right) \\
&\geq \ve \cdot  \frac{1}{\frac{2}{r} + \ve \left(\frac4{r} + z(i)\right)} \left( \frac{z(i)}{4} - \frac{2}{n} \right) 
\end{align*}
The analysis concludes by considering two cases.
If $\ve z(i) \geq \frac{2}{r} + \ve \cdot \frac{4}{r}$, then we have the lower bound $\ve \cdot \frac{1}{2\ve z(i)} \left(\frac{z(i)}{4} - \frac{2}{n} \right) = \Omega(1) \geq \Omega(\ve)$, as desired.
Otherwise, we have the lower bound $\ve \cdot  \frac{r}{12} \left( \frac{z(i)}{4} - \frac{2}{n} \right) \geq \ve \cdot \frac{r}{12}\left(\frac{1}{4r} - \frac{2}{n}\right) \geq \frac{\ve}{96}$, which completes the proof.

\section{Analysis for Equivalence Testing}
\label{sec:equivalence}
In this section, we will prove Theorem~\ref{thm:equivalence} by instantiating \ANACONDA{} with parameters $T = \Theta(\log^6 n)$, $m = \tilde \Theta(\log^6 n / \ve^2)$, and $\ve' = \frac{\ve}{\tilde \Theta(\log^3 n)}$. 


We will require the following proposition, which says that if $\dtv(p,q) = \ve$ and \ANACONDA{} selects an appropriate set $S$, then it will detect the discrepancy.
\begin{proposition}\label{prop:set}
  Suppose that $\dtv(p,q) = \ve$ and that within the first $T$ iterations a set $S$ is identified such that for some $i \in S$ and some $c > 0$,
  $$ \min\{z(i), z(i) - z(S)\}  \ge \frac {p(S) + q(S)}  {\tilde O(\log^c n)}. $$
  Then, for $\eps' = \frac {\eps} {\tilde O(\log^c n)}$ and $m = \tilde \Omega\left(\frac{ \log^{2 c} n } { \eps^2 } \right)$, the algorithm outputs that $\dtv(p,q) \geq \ve$ with probability at least $1 - \frac 1 {\poly \log n}$.
\end{proposition}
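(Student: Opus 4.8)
The plan is to show that the hypothesized set $S$, once it is selected in some iteration $t$, causes the conditional distributions $p_S$ and $q_S$ to differ by $\Omega(\ve')$ in $\ell_\infty$ distance at coordinate $i$, after which Corollary~\ref{cor:linf} finishes the job. The key quantity is the discrepancy of the conditional mass at $i$,
\[
\Delta_i \;\triangleq\; \left|\frac{p(i)}{p(S)} - \frac{q(i)}{q(S)}\right|
\;=\;\left|\frac{p(i)q(S) - q(i)p(S)}{p(S)q(S)}\right|.
\]
I would first rewrite the numerator in terms of $z = (p-q)/\ve$ and $\bar p = (p+q)/2$: writing $p(i) = \bar p(i) + \tfrac{\ve}{2}z(i)$, $q(i) = \bar p(i) - \tfrac{\ve}{2}z(i)$, and similarly for $p(S), q(S)$, the numerator becomes (up to the factor $\ve$) something proportional to $z(i)\,\bar p(S) - \bar p(i)\, z(S) \;=\; z(i)\,\bar p(S\setminus i) - \bar p(i)\, z(S\setminus i)$, exactly the same algebraic identity used in Section~\ref{sec:uniform} but with $\bar p$ in place of $\mathcal{U}_n$. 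The hypothesis $\min\{z(i),\, z(i)-z(S)\}\ge \frac{p(S)+q(S)}{\tilde O(\log^c n)} = \frac{\bar p(S)}{\tilde O(\log^c n)}$ is precisely designed so that both $z(i)\bar p(S\setminus i)$ is large (it is at least $z(i)\bar p(S)$ minus a lower-order term, and $z(i)$ is comparable to $\bar p(S)$ up to the polylog) and the combination $z(i) - z(S)$ staying large rules out cancellation. So the main step is the inequality $|z(i)\bar p(S) - \bar p(i) z(S)| \ge \big(z(i) - z(S)\big)\bar p(S) - \text{(correction)}$ — more carefully, splitting on the sign of $z(S)$: if $z(S)\le 0$ the two terms $z(i)\bar p(S)$ and $-\bar p(i)z(S)$ have the same sign and we get $\ge z(i)\bar p(S)$ directly; if $z(S)>0$ we use $\bar p(i)\le \bar p(S)$ to bound the negative contribution and invoke the second part of the hypothesis.

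Having established $\Delta_i \ge \Omega\!\big(\ve/\tilde O(\log^c n)\big) = \Omega(\ve')$ (modulo checking that the denominator $p(S)q(S)$ and the $\bar p(S) + \ve z(S)$-type correction terms in the denominator only cost constant factors, which follows because $z(S)\ge z(S\setminus i)-$ something and everything is normalized so that $\|z\|_1 = 2$), I would then apply Corollary~\ref{cor:linf} to the conditional distributions $p_S, q_S$ with error parameter $\ve'$ and confidence parameter $\delta = 1/\poly\log n$: with $m = \tilde\Omega(\log(1/\delta)/\ve'^2) = \tilde\Omega(\log^{2c} n/\ve^2)$ samples from each conditional distribution, with probability $\ge 1 - 1/\poly\log n$ the empirical conditional distributions satisfy $|\hat p_S(i) - \hat q_S(i)| \ge 3\ve'/4 > \ve'$ — wait, one must be slightly careful with the $3/4$ vs.\ $1$ constants, so I would instead run Corollary~\ref{cor:linf} at scale $(4/3)\ve'$ or simply absorb the constant into the $\tilde O$. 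Either way, the test on Line~\ref{ln:dkw} fires in iteration $t$ and the algorithm returns $\dtv(p,q)\ge\ve$.

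The main obstacle I anticipate is \emph{not} the $\ell_\infty$-testing step (that is a black-box application of Corollary~\ref{cor:linf}) but controlling the denominator and the lower-order correction terms in the expression for $\Delta_i$ cleanly enough that the polylog budget is not blown — in particular verifying that $z(S\setminus i)$ (which appears after pulling out the coordinate $i$) can be related to $z(S)$ without an extra loss, and that the ``$+\ve z(S)$'' terms in the denominators $p(S) = \bar p(S) + \tfrac{\ve}{2}z(S)$ do not make $p(S)$ or $q(S)$ vanish or blow up relative to $\bar p(S)$. Since $|z(S)| \le \|z\|_1 = 2$ and $\bar p(S) > 0$ on any $S$ that the hypothesis can apply to (the hypothesis forces $\bar p(S) > 0$), I expect this to come down to a short case analysis on whether $\ve|z(S)|$ dominates $\bar p(S)$ or not, mirroring the two-case endgame in Section~\ref{sec:few-small}; in the regime where it does dominate, $z(i)$ being comparable to $\bar p(S)$ is itself small so the discrepancy ratio $\Delta_i$ is $\Omega(1) \ge \Omega(\ve')$, and in the other regime the denominator is $\Theta(\bar p(S)^2)$ and the bound is immediate. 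I would present this case split explicitly as the crux of the argument.
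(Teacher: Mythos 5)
Your proposal is correct and follows essentially the same route as the paper's own proof: the same rewriting via $\bar p=(p+q)/2$ to obtain the numerator $z(i)\bar p(S)-\bar p(i)z(S)$, the same key inequality $z(i)\bar p(S)-\bar p(i)z(S)\ge\bar p(S)\min\{z(i),z(i)-z(S)\}$ (your sign-split on $z(S)$ is exactly the verification of that one-line bound, using $\bar p(i)\le\bar p(S)$ in the $z(S)>0$ case), and the same black-box appeal to Corollary~\ref{cor:linf}. The only simplification you miss is that the denominator is handled in one stroke: $p(S)q(S)=\bar p^2(S)-(z(S)\ve/2)^2\le\bar p^2(S)$, which immediately gives the lower bound with no further case analysis, so your anticipated "crux" about whether $\ve|z(S)|$ dominates $\bar p(S)$ does not actually arise.
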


\begin{proof}

We first argue that $|p_S(i) - q_S(i)| \ge \eps \frac { \min\{z(i), z(i) - z(S)\} } {p(S) + q(S)}$ for some $i \in S$.

We set $\bar p = \frac {p+q} 2$. We have that $p = \bar p + z \frac \eps  2$, $q = \bar p - z \frac \eps  2$ and
$$\left| \frac {p(i)} {p(S)} -  \frac {q(i)} {q(S)}\right| = \left| \frac {\bar p(i) + z(i) \frac \eps  2} {\bar p(S) + z(S) \frac \eps  2} -  \frac {\bar p(i) - z(i) \frac \eps  2} {\bar p(S) - z(S) \frac \eps  2}\right| = 
\frac \eps 2 \left| \frac {z(i) \bar p(S) - \bar p(i) z(S)} {\bar p^2(S) - ( z(S) \frac \eps  2)^2} \right| \ge 
\frac \eps 2 \left| \frac {z(i) \bar p(S) - \bar p(i) z(S)} {\bar p^2(S)} \right|. $$

As $z(i) \bar p(S) - \bar p(i) z(S) \ge \bar p(S) \min\{z(i), z(i) - z(S)\}$, it follows that $$|p_S(i) - q_S(i)| \ge \frac \eps 2 \frac { \min\{z(i), z(i) - z(S)\} } {(p(S) + q(S))/2} $$

To complete the proof, we note that the condition implies that 
$|p_S(i) - q_S(i)| \ge \frac {\eps}  {\tilde O(\log^c n)}$ and thus by Corollary~\ref{cor:linf}, $m = \tilde O\left(\frac{ \log^{2 c} n } { \eps^2 } \right)$ suffices to detect (with failure probability $1 / \poly\log n$) that $\|p_S - q_S\|_{\infty} > \eps' = \frac {\eps}  {\tilde O(\log^c n)}$.
\end{proof}

To complete the proof, we will show that after $T=\poly\log n$ iterations, Algorithm~\ref{alg:anaconda} will choose a set $S$ that satisfies the conditions of Proposition~\ref{prop:set}.

We define $\hat z$ to be the vector with $\hat z(i) = z(i)$ if $|z(i)| > \frac {p(i) + q(i) }{ 400 \log n } $ and $\hat z(i) = 0$ otherwise. 
Roughly, this ``zeroes out'' the noise for any $i$ where the noise vector $z$ is too small in comparison to the signal vector $p + q$.
Let $b^+$ be the measure on $\{1,\dots, 2 \log n\}$ with mass $\hat z^+( \Bin_j(z^+) )$ and equivalently define $b^{-}$. Notice that $|b^+|, |b^-| \in [1-\frac {1} {200 \log n}, 1]$. This is because $\sum_{i: \hat z^+(i) = 0} z^+(i) \le \sum_i \frac {p(i) + q(i) }{ 400 \log n } \le \frac {1} {200 \log n}$.

The next lemma shows that, if there are two bins (with respect to the positive and negative $z$ vectors) which are both ``heavy'' and are close in index, then we will obtain an appropriate set $S$ (for Proposition~\ref{prop:set}).
\begin{lemma}\label{lem:close}
  Let $\alpha, \beta, \gamma > 0$ be arbitrary constants. 
  If $b^+(j) > \frac 1 {\tilde O(\log^\alpha n)}$ and $b^-(j') > \frac 1 {\tilde O(\log^\beta n)}$, for some $j$ and $j'$ with $2^{|j-j'|} = \tilde O(\log^\gamma n)$, then a single iteration of Algorithm~\ref{alg:anaconda} finds set $S$ and $i \in S$ with 
  $ \min\{z(i), z(i) - z(S)\}  \ge \frac {p(S) + q(S)}  { \tilde O( \log^{\gamma + 1} n ) } $ with probability $\frac 1 {\tilde O(\log^{\alpha + \beta + \gamma + 1} n)}$.
\end{lemma}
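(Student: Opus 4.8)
\textbf{Proof proposal for Lemma~\ref{lem:close}.}

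The plan is to lower-bound, over a single iteration, the probability that \ANACONDA{} picks the ``right'' scale $r$ and a set $S$ which simultaneously (i) contains an element $i$ from the heavy positive bin $\Bin_j(z^+)$ with $z(i)$ of the ``typical'' magnitude $\approx 2^{-j}$, and (ii) captures a comparable amount of negative mass from $\Bin_{j'}(z^-)$, so that $z(i) - z(S)$ is also $\gtrsim 2^{-j}$, while (iii) the total signal mass $p(S)+q(S)$ in the set stays $O(2^{-j}\cdot\tilde O(\log^\gamma n))$. Once all three hold, the claimed bound $\min\{z(i), z(i)-z(S)\}\ge (p(S)+q(S))/\tilde O(\log^{\gamma+1}n)$ follows by arithmetic, and the overall success probability is the product of the probability of hitting the right $j$ (a factor $\frac{1}{2\log n}$ from Line~\ref{ln:select-r}) and the conditional probability of the event just described.

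First I would fix the scale: condition on the iteration choosing $r=2^j$ (probability $\ge \frac{1}{2\log n}$). By Proposition~\ref{prop:bin-size}, the heaviness hypothesis $b^+(j) = \hat z^+(\Bin_j(z^+)) > 1/\tilde O(\log^\alpha n)$ forces $|\Bin_j(z^+)| \ge 2^{j-1}/\tilde O(\log^\alpha n)$, so with probability $\ge 1/\tilde O(\log^\alpha n)$ (by the $1-(1-2^{-j})^{|\Bin_j|}\ge \frac12 |\Bin_j|/2^j$ estimate used in Lemma~\ref{lem:uniform-key}) the set $S$ contains some $i\in\Bin_j(z^+)$; for this $i$ we have $z(i)\ge 2^{-j}$, and since $i\in\Bin_j(z^+)$ while $\hat z(i)\ne 0$, also $p(i)+q(i)\le 400\log n\cdot z(i) = \tilde O(\log n)\cdot 2^{-j}$. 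Second, I would control the negative side: the sampling rate at scale $j$ differs from the rate $2^{-j'}$ ``natural'' for $\Bin_{j'}(z^-)$ by the factor $2^{j'-j}$, which is $\ge 1/\tilde O(\log^\gamma n)$ by hypothesis $2^{|j-j'|}=\tilde O(\log^\gamma n)$; combined with $b^-(j')>1/\tilde O(\log^\beta n)$ and Proposition~\ref{prop:bin-size}, the expected negative mass $\E[z^-(S)\cap\Bin_{j'}]$ is $\ge 2^{-j}/\tilde O(\log^{\beta+\gamma}n)$, and a Paley--Zygmund / second-moment argument (the summands are bounded by $2^{-j'+1}\le 2^{-j}\cdot\tilde O(\log^\gamma n)$) gives that with probability $\ge 1/\tilde O(\log^{\beta+\gamma}n)$ we have $z^-(S)\ge 2^{-j}/\tilde O(\log^{\gamma}n)$, hence $z(i)-z(S) = z(i)+z^-(S)-z^+(S\setminus i)\gtrsim 2^{-j}/\tilde O(\log^\gamma n)$ provided $z^+(S\setminus i)$ is not too large. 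Third, I would bound $p(S)+q(S)$: its expectation is $\frac1r(p([n])+q([n]))=2/r = 2^{-j+1}$, so by Markov it is $O(2^{-j}\cdot\tilde O(\log^{\gamma}n))$ with probability $\ge 1-1/\text{(large const)}\cdot\tilde O(\log^{-\gamma}n)$ — more carefully, I only need it to be $O(2^{-j}\cdot\tilde O(\log^\gamma n))$ with constant probability after conditioning on the earlier events, which Markov delivers after absorbing the conditioning cost into the $\tilde O$. Taking a union bound over the bad events and multiplying the surviving probabilities yields the stated $1/\tilde O(\log^{\alpha+\beta+\gamma+1}n)$.

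The main obstacle is the interplay between the three events: they are not independent (the same random set $S$ must be heavy on $\Bin_j(z^+)$, heavy enough on $\Bin_{j'}(z^-)$, and light in total mass), and in particular $z^+(S)$ appearing in $z(i)-z(S)$ must be shown to be dominated by $z(i)+z^-(S)$. I expect to handle this by noting $\E[z^+(S\setminus i)]\le 2^{-j}$ and using Markov to cap it at, say, $\frac14$ of the negative contribution with the remaining probability budget, exactly as Lemma~\ref{lem:rest-noise} does in the uniformity analysis; the bookkeeping of which $\tilde O(\cdot)$ factors land where (and ensuring the conditioning in the second-moment step for $z^-(S)$ doesn't destroy the variance bound, since the bin elements have magnitude within a factor $2$ of each other, this should be clean) is the delicate but routine part.
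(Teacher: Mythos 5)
There is a genuine gap, and it sits at the heart of the argument: your choice of scale $r = 2^j$ (with $j$ the index of the heavy \emph{positive} bin) is wrong, and as a consequence the step where you try to show that $z^+(S\setminus i)$ is dominated by the negative contribution cannot be repaired by Markov. With inclusion probability $2^{-j}$, we have $\E[z^+(S\setminus i)] \le 2^{-j}$, while the negative mass you harvest via Paley--Zygmund is only $z^-(S) \gtrsim 2^{-j}/\tilde O(\log^{\beta}n)$, i.e. a factor $\tilde O(\log^\beta n)$ \emph{smaller} than the positive mean. Markov gives $\Pr[z^+(S\setminus i) > t] \le 2^{-j}/t$, which is vacuous for any threshold $t$ below $2^{-j}$ -- so there is no ``remaining probability budget'' with which to cap $z^+(S\setminus i)$ at a quarter of the negative contribution, contrary to what you assert in your last paragraph. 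The failure is sharpest when $j' = j$: a single element from $\Bin_{j'}(z^-)$ has mass only $\approx 2^{-j}$, Markov gives $z^+(S\setminus i) \le 4\cdot 2^{-j}$ with probability $3/4$, and $2^{-j} - 4\cdot 2^{-j} < 0$, so $z(i) - z(S)$ has no positive lower bound. (Also note the algebra slip: $z(i) - z(S) = z^-(S\setminus i) - z^+(S\setminus i)$, not $z(i) + z^-(S) - z^+(S\setminus i)$, though this is cosmetic.)

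The paper's proof avoids this by conditioning on a \emph{smaller} inclusion probability, namely $2^{-\max\{j,j'\}-3}$, and then using a single negative element $i'$ rather than the aggregate negative mass. With this choice, the selected $i' \in \Bin_{j'}(z^-)$ satisfies $z^-(i') \ge 2^{-j'} \ge 8\cdot 2^{-\max\{j,j'\}-3}$, i.e. its mass alone is at least eight times the sampling rate. Meanwhile $\E[z^+(S\setminus\{i,i'\})]$ is \emph{one} times the sampling rate, so Markov does bound it by four times the rate with probability $3/4$, and the gap between $8$ and $4$ is exactly what makes $z(i)-z(S) \ge 4r > 0$. The $+3$ offset is not a cosmetic constant: it is the mechanism that lets one compare a single element's mass against the aggregate positive tail. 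Your second-moment/Paley--Zygmund route for the negative side is an interesting alternative to the paper's ``unique element'' argument and would be fine on its own, but it does not rescue the wrong choice of scale, because it cannot produce negative mass exceeding the expected positive mass when the two bin indices are close. Re-run your bookkeeping after replacing $r=2^j$ with $r = 2^{\max\{j,j'\}+3}$, target a single element in each of the two bins (the $-3$ also ensures each has expected count $\le 1/8$, hence uniqueness with constant probability), and the argument closes as in the paper.
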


\begin{proof}
  With probability $\frac 1 {\tilde O(\log n)}$, an iteration of Algorithm~\ref{alg:anaconda} will choose $r = 2^{-\max\{j,j'\}-3}$. Given this value of $r$, a unique $i$ with $\hat z^+(i) \in [2^{-j}, 2^{-j+1})$  and a unique $i'$ with $\hat z^-(i') \in [2^{-j'}, 2^{-j'+1})$ are selected with probability $\frac 1 {\tilde O(\log^{\alpha + \beta + \gamma} n)}$. It holds that $z^-(i'), z^+(i) \in [8,  \tilde O(\log^\gamma n)] \cdot r$ and their corresponding $p(i)+q(i) \le O( \log n ) \cdot z(i) \le \tilde O(\log^{1+\gamma} n) r$ and $p(i')+q(i') \le \tilde O(\log^{1+\gamma} n) r$.
  
By Markov's inequality, with probability at least $3/4$, $z(S \setminus \{i,i'\}) \le z^{+}(S \setminus \{i,i'\}) \le 4 r$. Similarly, with probability at least $3/4$, $p(S \setminus \{i,i'\}) + q(S \setminus \{i,i'\}) \le 8 r$. By a union bound with probability $1/2$ both hold simultaneously.

When all of these events occur, which happens with probability at least $\frac 1 {\tilde O(\log^{\alpha + \beta + \gamma + 1} n)}$ we get that:

$$\min\{z(i), z(i) - z(S)\} \ge 4 r \quad \text{ since } \quad z(i) - z(S) \ge z^{-}(i') - z(S \setminus \{i,i'\}) \ge 4 r$$

The lemma follows by noting that  $p(S) + q(S) \le \tilde O(\log^{\gamma + 1} n) r$.
\end{proof}

Finally, we have our main lemma required for the analysis.
It leverages Lemma~\ref{lem:close} to show that we can obtain an appropriate set $S$ with reasonable probability.

\begin{lemma}\label{lem:main-analysis}
  If $\dtv(p,q) = \ve$, then a single iteration of Algorithm~\ref{alg:anaconda} finds set $S$ and $i \in S$ with 
  $ \min\{z(i), z(i) - z(S)\}  \ge \frac {p(S) + q(S)}  { \tilde O( \log^{3} n ) } $ with probability $\frac 1 {\tilde O(\log^{6} n)}$.
\end{lemma}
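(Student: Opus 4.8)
The plan is a case analysis on how the truncated discrepancy $\hat z$ spreads its mass across bins, as recorded by the measures $b^+$ and $b^-$ on $\{1,\dots,2\log n\}$. Since $|b^+|,|b^-| \ge 1-\frac{1}{200\log n}$ and there are only $2\log n$ bins, averaging produces indices $j^+,j^-$ with $b^+(j^+),\,b^-(j^-) \ge \frac{1}{4\log n}$; informally $2^{-j^+}$ and $2^{-j^-}$ are the ``typical magnitudes'' of the positive and negative parts of $z$, and the proof splits on whether these magnitudes are comparable or far apart.

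\emph{Case 1 (comparable magnitudes).} Suppose there are heavy bins $j$ of $b^+$ and $j'$ of $b^-$ with $2^{|j-j'|} = \tilde O(\log^2 n)$ --- in particular, this holds whenever the $j^+,j^-$ above satisfy $2^{|j^+-j^-|} = \tilde O(\log^2 n)$. Here I would directly invoke Lemma~\ref{lem:close} with $\alpha=\beta=1$ (the bins have mass $\ge \frac{1}{4\log n}$) and $\gamma=2$, which produces a set $S$ and $i \in S$ with $\min\{z(i),\,z(i)-z(S)\} \ge \frac{p(S)+q(S)}{\tilde O(\log^{3}n)}$ with probability $\frac{1}{\tilde O(\log^{5}n)}$, already stronger than the claimed bound.

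\emph{Case 2 (far magnitudes).} Otherwise the positive and negative discrepancies concentrate at scales differing by more than a $\tilde O(\log^2 n)$ factor. Using the $p\leftrightarrow q$ symmetry --- which swaps $z^+\leftrightarrow z^-$ and leaves the detection quantity $|p_S(i)-q_S(i)|$ unchanged, so Proposition~\ref{prop:set} applies with either sign orientation --- I assume WLOG that the positive discrepancies are the coarser ones, $j^+ \ll j^-$. I would have Algorithm~\ref{alg:anaconda} select $r$ to be the power of two nearest $|\Bin_{j^+}(\hat z^+)| = \tilde\Theta(2^{j^+}/\log n)$ (chosen with probability $\frac{1}{2\log n}$), tuning the set size so that with constant probability $S$ contains a large positive element $i \in \Bin_{j^+}(\hat z^+)$; by Proposition~\ref{prop:bin-size} and the truncation rule such an $i$ has $z(i)\ge 2^{-j^+}$ and $p(i)+q(i)\le O(\log n)\,2^{-j^+}$. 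Exactly as in Lemmas~\ref{lem:rest-signal}--\ref{lem:rest-noise} and the proof of Lemma~\ref{lem:close}, Markov's inequality gives $p(S\setminus\{i\})+q(S\setminus\{i\}) = O(1/r)$ with probability $\ge 3/4$, whence $p(S)+q(S) = \tilde O(\log n\cdot 2^{-j^+})$. It remains to certify that $z(S\setminus\{i\})$ is net-negative of magnitude $\ge \frac{p(S)+q(S)}{\tilde O(\log^{3}n)}$ (equivalently, that $z(i)-z(S)$ is large): because $2^{j^-}\gg r$, the set $S$ captures $\gg 1$ of the tiny elements of $\Bin_{j^-}(\hat z^-)$, so a Chernoff bound pins $z^-(S\setminus\{i\})$ near its mean with probability $\ge 3/4$. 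Multiplying the probability $\frac{1}{2\log n}$ for $r$, the $\frac{1}{\poly\log n}$ for capturing a suitable $i$, and the $O(1)$-many constant-probability Markov/Chernoff events yields overall success probability $\frac{1}{\tilde O(\log^{6}n)}$ with the claimed quality $\frac{p(S)+q(S)}{\tilde O(\log^{3}n)}$.

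\emph{Main obstacle.} The delicate point is the final step of Case 2: since $\E[z^+(S)]=\E[z^-(S)]=1/r$, making $z(S\setminus\{i\})$ sizably negative cannot come from concentration of $z^-(S)$ alone --- one must simultaneously push the residual positive mass $z^+(S\setminus\{i\})$ well below $1/r$, which the Markov bound does not deliver. I expect this to require exploiting the different concentration behavior of $z^+$ and $z^-$ at the two far-apart scales, with a further sub-case split on whether the positive mass of $\hat z^+$ sits in a narrow band of bins near $j^+$ (so that conditioning on $S$ missing the $O(1)$ expected additional large positive elements kills $z^+(S\setminus\{i\})$) or is spread out (so that a coarser choice of $r$, or the symmetric orientation, is used), together with careful bookkeeping of the polylogarithmic factors so that the quality lands at $\tilde O(\log^3 n)$ and the probability at $\frac{1}{\tilde O(\log^6 n)}$.
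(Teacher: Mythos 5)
Your plan shares the paper's skeleton — a dichotomy, with the ``comparable scales'' case delegated to Lemma~\ref{lem:close} and the ``far scales'' case handled by a direct construction around a single chosen $r$ — but both the dichotomy itself and, more importantly, the resolution of the obstacle you flag at the end differ from the paper in ways that leave a real gap.

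The paper's dichotomy is not on whether some pair of heavy bins of $b^+$ and $b^-$ are close or far, but on whether $\dk(b^+,b^-) \le \frac{1}{8\log n}$. This is a clean partition that your version is not: if no pair of heavy bins is close, the mass of $b^-$ can still straddle the scale of $b^+$ (e.g., $b^+$ concentrated at $j_0$ while $b^-$ is split evenly between $j_0 - 3\log\log n$ and $j_0 + 3\log\log n$), so there is no coherent ``WLOG positive is coarser'' orientation for your Case 2. The Kolmogorov dichotomy hands you a crossing index $j_r$ with $\sum_{j\ge j_r} b^-(j) - \sum_{j\ge j_r} b^+(j) \ge \frac{1}{8\log n}$, which both fixes the orientation and — crucially — quantifies an $\Omega(1/\log n)$ margin of negative over positive mass at fine scales, a quantity your setup does not produce. (The paper then extracts $j^*$ with only $b^+(j^*) \ge \frac{1}{32\log^2 n}$ from the band $[j_l,j_r]$, which is why the Lemma~\ref{lem:close} sub-case has $\alpha = 2$ and drives the final $\log^6$; your Case~1 with $\alpha=\beta=1$ is cheaper but covers a strictly smaller set of situations.)

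The ``Main obstacle'' you identify — that $\E[z^+(S\setminus\{i\})]$ and $\E[z^-(S\setminus\{i\})]$ are both about $1/r$, so a plain Markov bound on $z^+$ cannot make $z(S\setminus\{i\})$ net-negative — is real and is exactly where the proof lives, but the paper resolves it differently than your speculative sub-case split on narrowness of $\hat z^+$. The key move is Markov with multiplier $A = 1 + \frac{1}{200\log n}$: this gives $\hat z^+\bigl(S\cap\bigcup_{j>j^*}\Bin_j(z^+)\bigr) \le r\sum_{j>j^*} b^+(j) + \frac{r}{200\log n}$ with probability $\frac{1}{200\log n + 1}$, i.e., a tight bound paid for with a $\frac{1}{O(\log n)}$ (not constant) probability. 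Combined with Lemma~\ref{lem:highprob} to zero out bins $j<j^*$, a Chernoff lower bound (Lemma~\ref{lem:concentration}) on $z^-(S)$ at fine scales, and the $\Omega(1/\log n)$ Kolmogorov margin, the positive residual is pushed strictly below the negative contribution by $\Omega(r/\log n)$. Neither the margin nor the tight Markov step appears in your sketch, and without the former the latter has nothing to compare against, so the subtraction in your Case~2 does not close.
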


Before we prove Lemma~\ref{lem:main-analysis}, we require the following two simple concentration lemmas:
\begin{lemma}\label{lem:concentration}
  Let $0 < a < b$, $X_i \sim Bernoulli(2^{-a})$ and let $1 > \sum_{i:x_i<2^{-b}} x_i \ge c$. Then, $\sum_{i:x_i<2^{-b}} X_i x_i > 2^{-a} ( c -  t 2^{-(b-a)/2} )$, with probability $1 - e^{-t}$.
\end{lemma}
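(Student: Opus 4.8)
The plan is to recognize the stated quantity as a sum of independent, nonnegative, \emph{bounded} random variables, compute its mean and a second‑moment proxy, and then apply a one‑sided (sub‑Gaussian) lower‑tail Chernoff bound. Concretely, write $I=\{i:x_i<2^{-b}\}$ and for $i\in I$ set $Y_i=X_i x_i$, so the $Y_i$ are independent, $Y_i\in[0,2^{-b})$, and $Y:=\sum_{i\in I}Y_i$ is the quantity in the statement. Its mean is $\mu:=\E[Y]=2^{-a}\sum_{i\in I}x_i\ge 2^{-a}c$ by the hypothesis $\sum_{i\in I}x_i\ge c$, and its uncentered second moment is $v:=\sum_{i\in I}\E[Y_i^2]=2^{-a}\sum_{i\in I}x_i^2\le 2^{-a}\cdot 2^{-b}\sum_{i\in I}x_i\le 2^{-(a+b)}$, where the first inequality uses $x_i<2^{-b}$ and the second uses $\sum_{i\in I}x_i<1$. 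This second‑moment estimate is the one place both hypotheses on the $x_i$ are needed, and producing it (rather than a cruder $|I|$‑dependent bound) is essentially the only step requiring care.

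Next I would invoke the standard lower‑tail bound for sums of independent nonnegative variables: for every $\lambda>0$, $\Pr[Y\le\mu-\lambda]\le\exp(-\lambda^2/(2v))$. This comes from the pointwise inequality $e^{-\theta Y_i}\le 1-\theta Y_i+\tfrac12\theta^2 Y_i^2$ (valid for $\theta,Y_i\ge 0$), giving $\E[e^{-\theta Y_i}]\le\exp(-\theta\,\E[Y_i]+\tfrac12\theta^2\,\E[Y_i^2])$; multiplying over $i\in I$ yields $\E[e^{-\theta Y}]\le\exp(-\theta\mu+\tfrac12\theta^2 v)$, and optimizing the Markov estimate at $\theta=\lambda/v$ gives the claimed bound. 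Since each $Y_i$ is bounded, there is no heavy lower tail, so the variance term alone suffices and no Bernstein correction is required.

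Finally I would plug in $\lambda=2^{-a}\,t\,2^{-(b-a)/2}=t\,2^{-(a+b)/2}$, so that $\lambda^2/(2v)\ge t^2 2^{-(a+b)}/(2\cdot 2^{-(a+b)})=t^2/2$ and hence $\Pr[Y\le\mu-\lambda]\le e^{-t^2/2}\le e^{-t}$. Because $\mu\ge 2^{-a}c$ we have $\mu-\lambda\ge 2^{-a}c-\lambda=2^{-a}(c-t\,2^{-(b-a)/2})$, so $\{Y>\mu-\lambda\}\subseteq\{Y>2^{-a}(c-t\,2^{-(b-a)/2})\}$, and the probability bound in the statement follows. I do not expect a genuine obstacle here: the routine points to watch are (i) the estimate $v\le 2^{-(a+b)}$, which needs \emph{both} $x_i<2^{-b}$ and $\sum x_i<1$, (ii) passing from $\E[Y]$ to the stated threshold $2^{-a}c$ via $\mu\ge 2^{-a}c$, and (iii) the bookkeeping $e^{-t^2/2}\le e^{-t}$, which is what one uses in the range of $t$ relevant to the application (and for smaller $t$ the bound $1-e^{-t}$ is weak, with the statement becoming vacuous once the threshold $2^{-a}(c-t\,2^{-(b-a)/2})$ drops below $0$).
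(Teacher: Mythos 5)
Your proof is correct and is essentially the paper's argument made explicit: the paper rescales to $Z_i = 2^b X_i x_i \in [0,1)$ and invokes ``the Chernoff bound,'' whereas you leave the variables unscaled and track the second moment $v=\sum_i \E[Y_i^2]\le 2^{-(a+b)}$ directly before applying the sub-Gaussian lower-tail bound $\Pr[Y\le\mu-\lambda]\le e^{-\lambda^2/(2v)}$ --- these are the same computation. Your remark that the argument actually yields $e^{-t^2/2}$, so that the stated $1-e^{-t}$ is only recovered for $t\ge 2$, is a fair and correct observation; the paper elides this, and in the one place the lemma is used (event (d) in the proof of Lemma~\ref{lem:main-analysis}, with failure probability $1/16$) one indeed has $t=\ln 16>2$, so no harm is done.
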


\begin{proof}
  We apply the Chernoff bound on the variables $Z_i = X_i 2^b x_i$. We get that with probability $1-e^{-t}$, $2^{b} \sum_{i:x_i<2^{-b}} X_i x_i > 2^{b-a} c - t 2^{(b-a)/2}$.
  Thus, $2^{a} \sum_{i:x_i<2^{-b}} X_i x_i > c - t 2^{-(b-a)/2}$.
\end{proof}

\begin{lemma}\label{lem:highprob}
  Let $a \ge 1$, $X_i \sim Bernoulli(2^{-a})$ and let $1 > \sum_{i:x_i>2^{-a}} x_i$. Then, $\sum_{i:x_i>2^{-a}} X_i x_i = 0$, with probability $\frac 1 4$.
\end{lemma}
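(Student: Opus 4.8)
The plan is to reduce the statement to an elementary product computation. Observe that each term $X_i x_i$ in the sum $\sum_{i : x_i > 2^{-a}} X_i x_i$ is nonnegative, and $x_i > 0$ for every index in the range of summation, so the sum equals $0$ if and only if $X_i = 0$ for all $i$ in the set $I := \{i : x_i > 2^{-a}\}$. Since the $X_i$ are independent $\mathrm{Bernoulli}(2^{-a})$ random variables, this event has probability exactly $(1 - 2^{-a})^{|I|}$, so it suffices to show this quantity is at least $1/4$.

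Next I would control $|I|$. Every $i \in I$ contributes more than $2^{-a}$ to the sum $\sum_{i \in I} x_i$, which is strictly less than $1$ by hypothesis; hence $2^{-a}|I| < 1$, i.e.\ $|I| < 2^a$. Since $0 < 1 - 2^{-a} < 1$, the map $N \mapsto (1 - 2^{-a})^N$ is nonincreasing, so $(1 - 2^{-a})^{|I|} \ge (1 - 2^{-a})^{2^a}$.

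Finally I would invoke the elementary fact that $x \mapsto (1-x)^{1/x}$ is nonincreasing on $(0, 1/2]$ — this follows, for instance, from the expansion $\tfrac{1}{x}\ln(1-x) = -1 - \tfrac{x}{2} - \tfrac{x^2}{3} - \cdots$, which is decreasing in $x$. Because $a \ge 1$ gives $2^{-a} \le 1/2$, taking $x = 2^{-a}$ yields $(1 - 2^{-a})^{2^a} \ge (1 - \tfrac{1}{2})^{2} = 1/4$, completing the argument. There is no real obstacle here; the only points deserving a moment of care are applying the monotonicity in $N$ in the correct direction (the base is below $1$, so a larger exponent gives a smaller value) and checking that both hypotheses genuinely enter — $a \ge 1$ to guarantee $2^{-a} \le 1/2$, and $\sum_i x_i < 1$ to bound $|I| < 2^a$.
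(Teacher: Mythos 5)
Your proof is correct and follows the same route as the paper: bound $|\{i : x_i > 2^{-a}\}|$ by $2^a$ via the hypothesis $\sum x_i < 1$, then compute the probability that none of those indices is selected. The only difference is that you spell out the monotonicity of $(1-x)^{1/x}$ to justify $(1-2^{-a})^{2^a} \ge 1/4$, which the paper leaves implicit.
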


\begin{proof}
  There are at most $2^a$ elements $x_i$ and every element is selected independently with probability $2^{-a}$. The probability that no element is chosen is
  $(1- 2^{-a})^{2^a} \ge \frac 1 4$.
\end{proof}

We continue with the proof of Lemma~\ref{lem:main-analysis}. 

\begin{prevproof}{Lemma}{lem:main-analysis}
We consider two cases:

\begin{enumerate}
  \item $\dk(b^+,b^-) \le \frac 1 {8 \log n}$.
  
  In this case, as $\sum b^+(j) > 2/3$, there will be a bin $j$ with $b^+(j) \ge \frac {2/3} {2 \log n}$.
  As the $\dk(b^+,b^-) \le \frac 1 {8 \log n}$, the corresponding $b^-(j) \ge \frac 1 {3 \log n} - \frac 2 {8 \log n} \ge \frac 1 {12 \log n}$. Then, Lemma~\ref{lem:close} implies that a good set will be identified with high probability.

  \item $\dk(b^+,b^-) > \frac 1 {8 \log n}$.
  
  In this case, there will be a bin $j_r$ with $|\sum_{j \ge j_r} b^-(j) - \sum_{j \ge j_r} b^+(j)| \ge \frac 1 {8 \log n}$. Without loss of generality, $\sum_{j \ge j_r} b^+(j) < \sum_{j \ge j_r} b^-(j)$. 
  
  Let $j_l$ be the largest index such that $\frac 1 {16 \log n} < \sum_{j = j_l}^{j_r} b^+(j)$.
  Then there must exist a $j^* \in [j_l, j_r]$ such that $b^+(j^*) > \frac 1 {32 \log^2 n}$ as $|[j_l, j_r]| \le 2 \log n$.
  
  If there is a $j \in [j^*, j^* + 2 \log \log n]$, with $b^-(j) > \frac 1 {100 \log n \log \log n}$, Lemma~\ref{lem:close} implies that 
  with probability $\frac 1 {O(\log^6 n)}$, $\min\{z(i), z(i) - z(S)\}  \ge \frac {p(S) + q(S)}  { \tilde O( \log^{3} n ) } $.
  
  Otherwise, we have that 
  $$\sum_{j \ge j^* + 2 \log \log n} b^-(j)  > \sum_{j \ge j_r} b^-(j) -  \frac 1 {100 \log n} > \sum_{j > j_l} b^+(j) + \frac 1 {16 \log n} - \frac 1 {100 \log n} > \sum_{j > j^*} b^+(j) + \frac 1 {20 \log n}. $$ 
  
  We will show that in this case, when the algorithm selects $r = 2^{-j^*}$, a good set $S$ is identified with non-trivial probability. We bound the contribution at the contribution from every bin.
  
  \begin{enumerate}
    \item From bin $b^+(j^*)$: With probability $\Omega( b^+(j^*) ) = \frac 1 {O(\log^2 n)}$, a unique $i$ with $\hat z^+(i) \in [2^{-j^*}, 2^{-j^*+1})$ is selected in $S$. It holds that $z^+(i) \in [r,  2r]$.
    
    \item From bin $b^+(j)$ for $j>j^*$:  $\hat z^+( S \cap (\bigcup_{j > j^*} \Bin_j(z^+)) \} ) \le A r \sum_{j > j^*} b^+(j) $ with probability at least $1-1/A$ for any $A \ge 1$. This holds by Markov's inequality. Setting $A = 1 + \frac 1 {200 \log n}$, we get that with probability at least $\frac 1 {200 \log n + 1}$, 
    $$\hat z^+( S \cap (\bigcup_{j > j^*} \Bin_j(z^+)) \} ) \le r \sum_{j > j^*} b^+(j) + \frac r {200 \log n} . $$
    
    \item From bin $b^+(j)$ for $j<j^*$: $\hat z^+( S \cap (\bigcup_{j < j^*} \Bin_j(z^+)) \} ) = 0$ with probability $1/4$. This holds by Lemma~\ref{lem:highprob}.

    \item From bin $b^-(j)$ for all $j$: $z^-(S) \ge r \sum_{j \ge j^* + 2 \log \log n } b^-(j) - \frac r {200 \log n}$ with probability $15/16$. This holds by a concentration bound presented in Lemma~\ref{lem:concentration}. We also have that $\hat z^-(S) \le 2 r$ with probability $1/2$. By a union bound both inequalities hold with probability at least $7/16$.
    
\end{enumerate}

Finally we consider the contribution from all elements $i$ not accounted for. These are all the elements in the set $Z =\left\{i:  |z(i)| \le \frac {p(i) + q(i) }{ 400 \log n } \right\}$.
  
\begin{enumerate}
\item[(e1)] We have that $z^+( S \cap Z ) \le 4 r z^+( Z ) $ with probability $1/4$. This holds by Markov's inequality.
\item[(e2)] We similarly have $p(S \cap Z ) + q(S \cap Z )  \le 4 r \big( p(Z ) +  q(Z ) \big) \le 8 r$ with probability $1/4$. 
\end{enumerate}
By a union bound, the event (e) that includes both events (e1) and (e2) holds w.p. at least $1/2$.
  
Events (a),(b),(c),(d), and (e) are all independent, so all conditions hold with probability $\frac 1 {O(\log^3 n)}$.
  
When these occur we get that

\begin{align*}
  - z(S \setminus \{i\}) &\ge r \sum_{j \ge j^* + 2 \log \log n } b^-(j) - r \sum_{j > j^*} b^+(j) - \frac {r} {100 \log n} - 4 r z^+( Z ) \\
  &\ge \frac r {20 \log n} - \frac {r} {100 \log n} - \frac {r} {50 \log n} \ge \frac r {50 \log n}.
\end{align*} 
  
  In addition, $z(i) \ge r$ and thus $ \min\{z(i), z(i) - z(S)\} \ge \frac r {50 \log n}$. We also have that $p(S) + q(S) \le 
  p(S \cap Z ) + q(S \cap Z ) + p(S \cap \bar Z ) + q(S \cap \bar Z ) \le
  O(\log n) \cdot r$. The last equality holds by the bound (e2) on $p(S \cap Z ) + q(S \cap Z )$ and the fact that $p(S \cap \bar Z ) + q(S \cap \bar Z ) < O(\log n) |z(S \cap \bar Z)| \le O(\log n) \cdot r$
  
  Thus, with probability $\frac 1 {O(\log^3 n)}$, $\min\{z(i), z(i) - z(S)\}  \ge \frac {p(S) + q(S)}  { \tilde O( \log^{2} n ) } $.
  
\end{enumerate}

\end{prevproof}

Finally, with Lemma~\ref{lem:main-analysis} in hand, we combine it with Proposition~\ref{prop:set} to complete the proof of Theorem~\ref{thm:equivalence}.

\begin{prevproof}{Theorem}{thm:equivalence}
  Set $T = \Theta(\log^6(n))$. Then Lemma~\ref{lem:main-analysis} implies that, with constant probability, after $T$ iterations, a set $S$ will be identified such that for some $i \in S$,
  $$ \min\{z(i), z(i) - z(S)\}  \ge \frac {p(S) + q(S)}  {\tilde O(\log^3 n)}. $$
  
Proposition~\ref{prop:set} then implies that for $\eps' = \frac {\eps} {\tilde O(\log^3 n)}$ and $m = \tilde \Omega\left(\frac{ \log^{6} n } { \eps^2 } \right)$, the algorithm outputs that $\dtv(p,q) \geq \ve$ with probability at least $1 - \frac 1 {\poly \log n}$.

In contrast, when $\dtv(p,q) = 0$, the algorithm incorrectly outputs that $\dtv(p,q) \geq \ve$ with probability at most $\frac 1 {\poly \log n}$.
\end{prevproof}

\section{Analysis for Identity Testing}
\label{sec:identity}
In this section, we discuss how our results for uniformity testing imply Theorem~\ref{thm:identity} for identity testing.
We adapt the reduction of~\cite{ChakrabortyFGM16}, from non-adaptive identity testing to non-adaptive near-uniform identity testing.
As before, identity testing is the problem of testing whether an unknown distribution $p$ is equal to a known distribution $q$, or $\ve$-far from it in total variation distance.
In near-uniform identity testing, we are only concerned with testing identity to those $q$ who are ``sufficiently close'' to the uniform distribution in $\ell_\infty$-distance.
We use Algorithm 4.2.2 of~\cite{ChakrabortyFGM16}, with a few crucial differences.
First, for those familiar with their paper, in the following paragraph we summarize the diffs required to obtain our algorithm from that of~\cite{ChakrabortyFGM16}.
Following that, for completeness, we state the algorithm with these diffs implemented.

First, in Line 1, they partition the domain using $Bucket(q, [n], \frac{\ve}{30})$.
We perform a less fine-grained partitioning, using $Bucket(q, [n], \frac{1}{100})$.
Additionally, their bucketing defines $M_0$ as all $i$ such that $q(i) < \frac{1}{n}$.
We define it as all $i$ such that $q(i) < \frac{\ve}{100n}$.\footnote{We note that the original definition of $M_0$ used in~\cite{ChakrabortyFGM13,ChakrabortyFGM16} appears to be an erratum, and a similar modification is required for the reduction to go through in their setting as well.}
The first modification will require a stronger near-uniform identity tester than the one in their paper, which can handle identity testing to any distribution $q$ such that $\|q - \mathcal{U}_n\|_\infty \leq \frac{1}{100n}$.
The second change implies that we do not have to do a near-uniform identity test on $M_0$ -- either $\|z(M_0)\|_1 > 1/50$ and the discrepancy will be discovered in Line 3, or $\|z(M_0)\|_1 \leq 1/50$, and this bucket can be ignored, as $\|z([n] \setminus M_0)\|_1 \geq 49/50$.
As a result of these changes, there are only $\Theta(\log (n/\ve))$ buckets in the partition, and we perform the tests in Line 2 with error bound $\frac{\d \log (1 + 1/100) }{2\log (100n/\ve)}$.

The algorithm for \NACOND identity testing is given in Algorithm~\ref{alg:newid}.
It relies upon a subroutine $Bucket$, which is described in Definition~\ref{def:bucket}.

\begin{definition}[Modification of Definition 2.2.5 of~\cite{ChakrabortyFGM16}]
\label{def:bucket}
Given an explicit distribution $q$ over $[n]$, $Bucket(q, [n], \tau, \ve)$ is a procedure that generates a partition $\{M_0, \dots, M_k\}$ of the domain $[n]$, where $k = \frac{\log (n/\tau)}{\log (1+\ve)} \leq \frac{2}{\ve} \log (n/\tau)$. This partition satisfies the following conditions:
\begin{itemize}
\item $M_0 = \{i \in [n]\ |\ q(i) < \frac{\tau}{n}\};$
\item for all $j \in [k]$, $M_j = \{i \in [n]\ |\ \frac{\tau(1+\ve)^{j-1}}{n} \leq q(i) \leq \frac{\tau(1+\ve)^j}{n}\}$.
\end{itemize}
\end{definition}

\begin{algorithm}[htb]
\begin{algorithmic}[1]
\Function{\textsc{NonAdaptiveIdentity}}{$\ve$, $\delta$, $\NACOND_p$ oracle, description of $q$}
\State Let $\mathcal{M} = \{M_0, M_1, \dots, M_k\}$ be the output of $Bucket(q, [n], \frac{\ve}{100}, \frac{1}{100})$.
\State For each bucket $M_1, \dots, M_k$, use an \NACOND $\frac{1}{100n}$-near-uniform identity test to determine whether there exists a $j$ such that $\dtv(p_{M_j}, q_{M_j}) \geq \frac{\ve}{2}$ with error probability $\frac{\delta \log(1 + 1/100)}{2\log(100n/\ve)}$. \label{ln:near-uniform-test}
\State If any such $j$ is found, \Return $\dtv(p,q) \geq \ve$.
\State Let $\tilde p$ and $\tilde q$ be distributions over $\{0, 1, \dots, k\}$, where $\tilde p(i) = p(M_j)$ and $\tilde q(i) = q(M_j)$.
\State Use a \SAMP identity test to determine whether $\tilde p = \tilde q$ or $\dtv(\tilde p, \tilde q) \geq \ve/2$, and \Return the corresponding answer.
\EndFunction
\end{algorithmic}
\caption{An algorithm for testing identity to $q$ given \NACOND oracle access to $p$}
\label{alg:newid}
\end{algorithm}

With these changes, mimicking the analysis of Theorem 4.2.1 of~\cite{ChakrabortyFGM16} gives the following theorem:
\begin{theorem}
\label{thm:id-to-uniform}
Suppose there exists an $m(n,\ve,\delta)$-query algorithm, which, given \NACOND access to an unknown distribution $p$ over $[n]$ and a description of a distribution $q$ over $[n]$ such that $\|q - \mathcal{U}_n\|_\infty \leq \frac{1}{100n}$, distinguishes between the cases $p = q$ versus $\dtv(p,q) \geq \ve$ with probability $1 - \delta$.

Then there exists an algorithm which, given \NACOND access to an unknown distribution $p$ on $[n]$ and a description of a distribution $q$, makes $O\left(\log (n/\ve) \cdot m\left(n, \ve/2, \frac{\log (1 + 1/100)}{6\log (100n/\varepsilon)}\right) + \frac{\sqrt{\log (n/\ve)}}{\ve^2}\right)$ queries to the oracle on $p$ and distinguishes between the cases $p = q$ and $\dtv(p,q) \geq \ve$ with probability at least $2/3$.
\end{theorem}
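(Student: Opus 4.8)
The statement is a reduction: given a near-uniform \NACOND identity tester with query complexity $m(n,\ve,\delta)$, build a general \NACOND identity tester. I would prove it by establishing correctness and query complexity of Algorithm~\ref{alg:newid}, closely following the structure of the bucketing reduction of~\cite{ChakrabortyFGM16} but accounting for the two modifications flagged in the text (the coarser $\tau = \ve/100$ threshold for $M_0$, and the coarser bucketing parameter $1/100$). The key structural fact is the standard ``flattening'' decomposition: if $p = q$ then $\tilde p = \tilde q$ and $p_{M_j} = q_{M_j}$ for every $j$; conversely, if $\dtv(p,q) \geq \ve$ then either the \emph{coarse} discrepancy $\dtv(\tilde p, \tilde q)$ is large, or the discrepancy is concentrated \emph{within} some bucket, i.e. $\dtv(p_{M_j}, q_{M_j})$ is large for some $j \in [k]$, or it lives on $M_0$.

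\textbf{Step 1 (soundness: decomposing the discrepancy).} Assume $\dtv(p,q) \geq \ve$. Write $\dtv(p,q) = \frac12 \sum_{i}|p(i)-q(i)| = \frac12 \sum_{j=0}^k \sum_{i \in M_j}|p(i)-q(i)|$. First dispense with $M_0$: since every $i \in M_0$ has $q(i) < \frac{\ve}{100n}$, we get $q(M_0) \le \frac{\ve}{100}$, so the contribution of $M_0$ to $\dtv(p,q)$ is at most $\frac12(p(M_0) + q(M_0))$; if $p(M_0) > \frac{\ve}{50}$ (say) this is a large discrepancy in the coarse distribution $|\tilde p(0) - \tilde q(0)|$ and is caught by the \SAMP identity test in Line~6; otherwise $M_0$'s total contribution is $O(\ve)$ and we may assume (after rescaling constants) that $\dtv(p,q)$ restricted to $[n]\setminus M_0$ is still $\ge \ve/2$. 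Now use the triangle-inequality splitting within each real bucket $M_j$: for a fixed $j$, $\sum_{i\in M_j}|p(i)-q(i)| \le |p(M_j)-q(M_j)| + \bar p(M_j)\cdot 2\dtv(p_{M_j},q_{M_j}) \cdot (\text{const})$ — more precisely, one writes $p(i) - q(i) = q(M_j)(p_{M_j}(i)-q_{M_j}(i)) + q_{M_j}(i)(p(M_j)-q(M_j)) + (p(M_j)-q(M_j))(p_{M_j}(i)-q_{M_j}(i))$ and sums. Summing over $j$, if \emph{every} within-bucket distance $\dtv(p_{M_j},q_{M_j}) < \ve/2$ then the total within-bucket contribution is $< \ve/2 \cdot \sum_j q(M_j) \cdot O(1)$, which forces $\sum_j |p(M_j)-q(M_j)| = 2\dtv(\tilde p,\tilde q) = \Omega(\ve)$, caught by the Line~6 \SAMP test. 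So at least one of the $k$ near-uniform tests in Line~\ref{ln:near-uniform-test}, or the coarse test, detects the discrepancy.

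\textbf{Step 2 (near-uniformity of each $q_{M_j}$).} To legally invoke the assumed tester on bucket $M_j$, I must check $\|q_{M_j} - \mathcal{U}_{M_j}\|_\infty \le \frac{1}{100 |M_j|}$ — this is exactly where the $Bucket$ parameter $\ve = 1/100$ enters. Within $M_j$ (for $j \ge 1$) all masses lie in $[\frac{\tau(1+1/100)^{j-1}}{n}, \frac{\tau(1+1/100)^j}{n}]$, a multiplicative window of $(1+1/100)$; a short computation shows the conditional distribution $q_{M_j}$ then satisfies $\max_i q_{M_j}(i) \le (1+1/100)\min_i q_{M_j}(i)$, and since $\min_i q_{M_j}(i) \le \frac{1}{|M_j|} \le \max_i q_{M_j}(i)$, this yields $\|q_{M_j} - \mathcal{U}_{M_j}\|_\infty \le \frac{1/100}{|M_j|} \le \frac{1}{100|M_j|}$, as needed. (Note: the tester's guarantee as stated is ``$\le \frac{1}{100n}$''-close; since Section~\ref{sec:uniform} is claimed to handle any distribution $O(1/n)$-close in $\ell_\infty$, on a domain of size $|M_j|$ the requirement is $O(1/|M_j|)$-closeness, which is what we have verified.)

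\textbf{Step 3 (completeness and error/query accounting).} If $p = q$, then $p_{M_j} = q_{M_j}$ for all $j$ and $\tilde p = \tilde q$, so each subroutine returns ``equal'' except with its own failure probability; union-bounding the $k = O(\log(n/\ve))$ near-uniform tests (each run with failure probability $\Theta\!\big(\frac{\log(1+1/100)}{\log(100n/\ve)}\big) = \Theta(1/\log(n/\ve))$) and the single \SAMP test gives total failure $\le 1/3$. In the soundness case the same union bound applies to whichever test sees the discrepancy. For the query count: there are $k = O(\log(n/\ve))$ buckets, each costing $m\!\big(n, \ve/2, \Theta(1/\log(n/\ve))\big)$ queries (here one should note $|M_j| \le n$ so $m(|M_j|,\cdot,\cdot) \le m(n,\cdot,\cdot)$, using monotonicity of $m$ in the domain size), plus the \SAMP identity test on a known distribution over $k+1 = O(\log(n/\ve))$ elements to accuracy $\ve/2$, which by~\cite{Paninski08,ValiantV17a} costs $O\!\big(\frac{\sqrt{\log(n/\ve)}}{\ve^2}\big)$; summing gives the claimed bound.

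\textbf{Main obstacle.} The delicate part is Step~1 — propagating the $\ell_1$ discrepancy through the bucketing cleanly. The three-term cross-product expansion of $p(i)-q(i)$ in terms of $p_{M_j},q_{M_j},p(M_j),q(M_j)$ produces a bilinear error term $(p(M_j)-q(M_j))(p_{M_j}(i)-q_{M_j}(i))$ that must be absorbed, and one has to be careful that the constants in ``$\dtv(\tilde p,\tilde q) \ge \ve/2$'' versus ``some $\dtv(p_{M_j},q_{M_j}) \ge \ve/2$'' are chosen consistently so that the $M_0$-shedding in the first paragraph of Step~1 and the within/across-bucket split compose without the constant drifting below the threshold the subroutine is run at. This is exactly the bookkeeping that~\cite{ChakrabortyFGM16} carry out in their Theorem 4.2.1, and the claim is that it survives the two modifications; I would verify it by re-deriving their inequality chain with $\tau = \ve/100$ and bucket-ratio $1+1/100$ substituted throughout.
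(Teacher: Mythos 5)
Your proposal follows essentially the same route as the paper — the bucketing reduction of Chakraborty, Fischer, Goldhirsh, and Matsliah (Theorem~4.2.1 of that work) with the two stated modifications (the coarser $M_0$ cutoff $\tau=\ve/100$ and the coarser bucket ratio $1+1/100$). In fact, the paper's own ``proof'' consists only of the query-complexity count in your Step~3 (which you reproduce correctly) and an explicit deferral of correctness to the cited work, so you are supplying strictly more detail than the paper does. Your Step~2 — verifying that the bucket ratio $1/100$ yields $\|q_{M_j}-\mathcal{U}_{M_j}\|_\infty \le \frac{1}{100|M_j|}$, which is exactly what the hypothesized near-uniform tester requires on a domain of size $|M_j|$ — is a correct and useful observation that the paper leaves implicit.

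Two remarks on Step~1. First, your three-term cross-product expansion of $p(i)-q(i)$ creates a bilinear remainder that you then have to absorb; the standard two-term split $p(i)-q(i) = (p(M_j)-q(M_j))\,p_{M_j}(i) + q(M_j)\,(p_{M_j}(i)-q_{M_j}(i))$ yields $\sum_{i\in M_j}|p(i)-q(i)| \le |p(M_j)-q(M_j)| + q(M_j)\cdot 2\dtv(p_{M_j},q_{M_j})$ directly, with no cross term, and is the cleaner route. Second, the constant-tracking obstacle you flag is real: after shedding $M_0$ (whose weighted within-bucket contribution is at most $2q(M_0)\le \ve/50$), the bound becomes $2\ve \le 2\dtv(\tilde p,\tilde q) + \sum_{j\ge 1} q(M_j)\cdot 2\dtv(p_{M_j},q_{M_j}) + \ve/50$, and if every tested quantity were strictly below $\ve/2$ the right side is bounded only by $2.02\ve$ — not a contradiction. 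This is easily repaired by running the coarse \SAMP test at threshold $\ve/4$ (say) rather than $\ve/2$, which changes only constants and leaves the claimed asymptotic query bound intact; your proposal acknowledges the issue without pinning down the fix, but the paper does not address it at all, so you are at least at parity with the source.
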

While we omit the correctness of the algorithm (the interested reader can refer to~\cite{ChakrabortyFGM16}), the query complexity is straightforward to analyze.
There are $k = \Theta(\log(n/\ve))$ calls to the \NACOND near-uniform identity tester (with the corresponding parameters), followed by \SAMP identity testing on a support of size $k+1 = \Theta(\log (n/\ve))$, for which the optimal sample complexity is $\Theta\left(\frac{\sqrt{\log (n/\ve)}}{\ve^2}\right)$~\cite{Paninski08, ValiantV17a}.

It remains to show the existence of such an \NACOND near-uniform identity tester, as required by Line~\ref{ln:near-uniform-test} of Algorithm~\ref{alg:newid}.
In the rest of this section, we will sketch how the analysis of Theorem~\ref{thm:uniform} can be extended to apply to any distribution $q$ such that $\|q - \mathcal{U}_n\|_\infty \leq \frac{1}{100n}$, while maintaining the same sample complexity:
\begin{theorem}[Non-Adaptive Near-Uniform Identity Testing]
\label{thm:near-uniform}
There exists an algorithm which, given \NACOND access to an unknown distribution $p$ over $[n]$ and a description of a distribution $q$ over $[n]$ such that $\|q - \mathcal{U}_n\|_\infty \leq \frac{1}{100n}$, makes $\tilde O\left(\frac{\log n}{\ve^2}\right)$ queries to the oracle on $p$ and distinguishes between the cases $p = q$ versus $\dtv(p,q) \geq \ve$ with probability at least $2/3$.
\end{theorem}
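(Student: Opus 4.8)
The plan is to mirror the proof of Theorem~\ref{thm:uniform} almost verbatim, running \ANACONDA{} with the same parameters $T = \Theta(\log n)$, $m = \Theta(\log\log n/\ve^2)$, and $\ve' = \Theta(\ve)$ (when $q$ is given explicitly, \NACOND$_q$ queries are simulated, or one simply works with $q_S$ directly). The only work is to replace every use of the exact uniformity of the reference distribution in Section~\ref{sec:uniform} by the weaker fact that $\|q-\mathcal{U}_n\|_\infty \le \tfrac{1}{100n}$ forces $q(i)\in[\tfrac{99}{100n},\tfrac{101}{100n}]$, i.e.\ $q(i)=\Theta(1/n)$ for every $i$. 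Concretely, the target is the analogue of Lemma~\ref{lem:discrepant-set}: if $\dtv(p,q)=\ve$ then on each iteration \ANACONDA{} selects, with probability $\ge\Omega(1/\log n)$, a set $S$ containing an index $i$ with $\bigl|\tfrac{p(i)}{p(S)}-\tfrac{q(i)}{q(S)}\bigr|\ge\Omega(\ve)$. Once this holds, the rest of the argument is identical to the uniformity case: $T=\Theta(\log n)$ repetitions boost the probability of hitting such an $S$ to $\ge 9/10$; on that iteration $m=\Theta(\log\log n/\ve^2)$ samples let Corollary~\ref{cor:linf} report the $\Omega(\ve)$ discrepancy in $\ell_\infty$ with $\ve'$ a small enough constant multiple of $\ve$; and in the completeness case $p=q$ a union bound over the $T$ iterations shows no spurious $\ve'$-discrepancy is ever found. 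The query complexity is $O(Tm)=\tilde O(\log n/\ve^2)$.

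To establish the generalized key lemma I would reuse the two-case split of Section~\ref{sec:uniform}. In Case~I (many small discrepancies), the argument goes through unchanged: choosing $j=\log n$ and $S=\{i_1,i_2\}$ with $i_1\in\Bin_{j'}(z^+)$, $i_2\in\Bin_{j''}(z^-)$ for $j',j''$ in the stated range, a direct computation gives
$$ \left|\frac{p(i_1)}{p(S)}-\frac{q(i_1)}{q(S)}\right| \;=\; \ve\,\frac{z(i_1)q(i_2)+|z(i_2)|\,q(i_1)}{q(S)\,p(S)}, $$
whose numerator is $\ve\cdot\Theta(1/n^2)$ (using $|z(i_1)|,|z(i_2)|\ge \tfrac{1}{5n}$ and $q(i_1),q(i_2)=\Theta(1/n)$) and whose denominator is $\Theta(1/n^2)$ (since $q(S)=\Theta(1/n)$ and $\tfrac{99}{100n}\le p(i_1)\le p(S)=O(1/n)$), giving the desired $\Omega(\ve)$ bound with slightly worse constants than in Section~\ref{sec:many-small}.

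In Case~II I would note that Lemma~\ref{lem:uniform-key} and Lemma~\ref{lem:rest-noise} refer only to $z$ and the random choices of $r$ and $S$, so they transfer with no change. The one place uniformity is genuinely invoked is Lemma~\ref{lem:rest-signal}, the bound on $\mathcal{U}_n(S\setminus\{i\})$; here $q(S\setminus\{i\})=\sum_{i'\in S\setminus\{i\}}q(i')$ is a sum of independent variables, each equal to $q(i')=\Theta(1/n)$ with probability $1/2^j$ and $0$ otherwise, with mean $(1-q(i))/2^j=\Theta(1/2^j)$. Because all summands lie within a factor $1.02$ of $1/n$ and, for $j\le\log(n/32)$, the mean dominates any single summand, a weighted (multiplicative) Chernoff bound yields $\Pr\!\left[\tfrac{1}{2\cdot 2^j}\le q(S\setminus\{i\})\le \tfrac{2}{2^j}\right]\ge 1-2/e^2$, exactly the role played by Lemma~\ref{lem:rest-signal} (after adjusting absolute constants, and possibly mildly widening the window). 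Plugging this, together with $q(i)=\Theta(1/n)$ in place of $1/n$, into the final computation of~(\ref{eq:discrepancy}) reproduces the same two-subcase lower bound $\Omega(\ve)$; the fact that $r\le n/32$ keeps $q(i)$ a small fraction of $q(S)$, so the estimates in Section~\ref{sec:few-small} survive intact up to constants.

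The step I expect to require the most care is precisely this constant-tracking. One must check that after replacing each $1/n$ by a quantity in $[\tfrac{99}{100n},\tfrac{101}{100n}]$ and allowing the slightly looser concentration window for $q(S\setminus\{i\})$, the resulting $\ell_\infty$ gap remains a fixed constant fraction of $\ve$ in every sub-case, so that a single choice $\ve'=\Theta(\ve)$ and $m=\Theta(\log\log n/\ve^2)$ still suffices and the $\tilde O(\log n/\ve^2)$ bound is unchanged — just as for plain uniformity testing. No new idea beyond Section~\ref{sec:uniform} is needed.
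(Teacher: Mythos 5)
Your proposal matches the paper's own proof: both transfer the $z$-only parts of Section~\ref{sec:uniform} unchanged, redo the Case~I computation of~(\ref{eq:discrepancy}) with $q(i)\in[\tfrac{99}{100n},\tfrac{101}{100n}]$ replacing $\tfrac1n$, and replace Lemma~\ref{lem:rest-signal} by the analogous concentration bound on $q(S\setminus\{i\})$ (your weighted Chernoff argument is the same as the paper's rescaled Chernoff, just phrased more directly). The approach and conclusions coincide.
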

With this in hand, instantiating Theorem~\ref{thm:id-to-uniform} with $m(n, \ve, \d) = \tilde O\left(\frac{\log n}{\ve^2} \cdot \log (1/\delta)\right)$\footnote{Note that a standard boosting applied to Theorem~\ref{thm:near-uniform} gives a $1 - \delta$ probability of success at a multiplicative cost of $\log (1/\delta)$.} gives Theorem~\ref{thm:identity}.

Most of the analysis in Section~\ref{sec:uniform} involves reasoning about the noise vector $z$, none of which changes for this setting.
The exceptions are at the end of Sections~\ref{sec:many-small} and~\ref{sec:few-small}, where we argue that (\ref{eq:discrepancy}) is large.
We deal with the former case first -- here, (\ref{eq:discrepancy}) can be written as
$$\ve \cdot \left|\frac{z(i_1)q(i_2) - z(i_2)q(i_1)}{q(S)(q(S) + \ve z(S))}\right| \geq \ve \frac{2 \cdot \frac{1}{5n} \cdot \frac{99}{100n} }{\frac{202}{100n}\left(\frac{202}{100n} + \ve \cdot  \frac{32}{n}\right)} \geq \Omega(\ve),$$
as desired.
In the latter case, the proof follows with two minimal changes in the events that happen simultaneously (mentioned towards the end of the section).
Instead of $\mathcal{U}_n(i) = 1/n$, we have that $q(i) \leq 101/100n$.
Also, instead of $\frac{1}{2r} \leq \mathcal{U}_n(S \setminus i) \leq \frac{3}{2r}$, we have that $\frac{1}{2r} \leq q(S \setminus i) \leq \frac{3}{2r}$.
This can be proved by essentially the same argument as Lemma~\ref{lem:rest-signal}, but rescaling at the end by a factor of $100n/99$ or $100n/101$.
With these changes, the argument is identical, and thus we have Theorem~\ref{thm:near-uniform}, implying Theorem~\ref{thm:identity}.

\section{Open Problems}
\label{sec:open}
In this paper, we managed to attain improved upper bounds for several testing problems in the \NACOND model.
However, there is still much room for improvement, since our upper bounds only match the lower bounds for the case of uniformity testing, where the complexity is known to be $\tilde \Theta(\log n)$.

A first question is to sharply characterize the complexity of general identity testing.
While in the \SAMP model, uniformity testing is known to be complete for identity testing~\cite{Goldreich16}, a moment's thought indicates that the same reduction does not immediately hold for either the \COND or \NACOND model.
This is (roughly) because Goldreich's reduction involves mapping the problem onto a larger domain, which would require more ``granular'' conditional samples than afforded by standard conditional sampling models in order for the reduction to go through. 
Therefore, it is plausible that testing identity to a general distribution $q$ is \emph{harder} than uniformity testing -- this would be a qualitative difference in complexity which we are not aware of in any other sampling model.

Naturally, another question is to characterize the query complexity of equivalence testing.
There are several possibilities here -- it may be the same as that of uniformity or identity testing, or distinct from both. 
We would consider either of the former two to be surprising, as this would be qualitatively different behavior than either of the two neighboring oracle models (\SAMP and \COND). 

Beyond the problems considered in this paper, there are many other unexplored questions in the world of distribution testing with conditional samples.
One problem is testing independence of random variables, which has not been considered at all in the conditional sampling model.
In \SAMP, the complexity of this problem is necessarily exponential in the dimension~\cite{AcharyaDK15,DiakonikolasK16}.
An interesting question is whether this dependence can be made polynomial (or even removed entirely) with conditional samples.
One natural approach for this problem would be to consider it as an instance of equivalence testing: given $d$ independent samples from a distribution, one can form a single sample from the product of the marginals by taking $i$th coordinate from the $i$th sample.
We would then test whether this is equivalent to the original distribution. 
Unfortunately, it is not obvious how one could simulate \emph{conditional} samples from the product of the marginals.

Finally, the core problems studied in this paper have been on non-tolerant distribution testing: for instance, we want to test whether $p$ is \emph{exactly} uniform, or far from it.
One could generalize this by asking whether $p$ is $\varepsilon/2$-\emph{close} to uniform, or $\varepsilon$-far from it.
In the \SAMP model, tolerant testing is known to be much harder than non-tolerant testing~\cite{ValiantV17b, JiaoHW16,DaskalakisKW18}, increasing the complexity of identity testing from $\Theta(\sqrt{n})$ to $\Theta(n/\log n)$.
For the specific case of uniformity testing in the \COND model, surprisingly, both the tolerant and non-tolerant versions of the problem are $\Theta_\varepsilon(1)$~\cite{CanonneRS14}.
It is therefore natural to ask: what is the complexity of tolerant testing (of uniformity, identity, and equivalence) with conditional samples (\COND or \NACOND)?

\section*{Acknowledgments}
GK would like to thank Cl\'ement L. Canonne for many helpful discussions and suggesting the inclusion of Section~\ref{sec:standalone}, Adam Bouland for discussing the connection between conditional sampling and quantum postselection, and Sam Elder for suggesting the name \ANACONDA.
The authors would also like to thank the reviewers for numerous suggestions which have greatly improved the presentation of the paper.
\nocite{Mercier01, FiedorV12, ChellapillaF00, Heymann87, Anaconda2016, MarajJSCPR14}

\bibliographystyle{alpha}
\bibliography{biblio}
\appendix
\section{A Standalone Discrepancy-Finding Lemma}
\label{sec:standalone}
In this section, we prove a standalone variant of Lemma~\ref{lem:discrepant-set}.
The phrasing is chosen to be more general, so that it may be applicable for those who are working in settings besides the conditional sampling model for distribution testing.

\begin{lemma}
Let $p, q, z \in \mathbb{R}^n$ be vectors, where $z = p -q$ and $\|z\|_1 = \ve$.
Let $S$ be a (random) set generated by the following process: 
 pick $j \in [\log n]$ uniformly at random, let $r = 2^j$, and selects a random set $S \subseteq [n]$ by choosing each $i \in [n]$ to be in $S$ with probability $1/r$.
Then with probability $\Omega\left(\frac{1}{\log n}\right)$, the following occurs simultaneously:
\begin{itemize}
\item $|S| = \Theta\left(\frac{n}{r}\right)$;
\item There exists $i \in S$ such that $|z(i)| = \Omega(1/r)$.
\end{itemize}
\end{lemma}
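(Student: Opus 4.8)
The plan is to mimic the proof of Lemma~\ref{lem:uniform-key} combined with the concentration argument of Lemma~\ref{lem:rest-signal}, but without the normalization subtleties, since here we only care about the \emph{size} of $S$ and the existence of a single heavy coordinate $i \in S$. The two events we must establish simultaneously --- $|S| = \Theta(n/r)$ and $\exists i \in S$ with $|z(i)| = \Omega(1/r)$ --- are, for a fixed value of $r$, independent when conditioned on $r$ only up to the fact that both depend on the same random set, but since one event concerns the count of selected indices and the other concerns whether at least one index from a prescribed set is selected, we can decouple them by a union bound after lower-bounding each separately. Concretely, I would condition on the ``good'' choice of $j$ (which happens with probability $1/\log n$), and within that conditioning show each event has constant probability; a union bound then gives constant probability that both hold, for an overall $\Omega(1/\log n)$.

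\textbf{Step 1 (locating a heavy bin).} Write $z = z^+ - z^-$. Since $\|z\|_1 = \ve$ we have $\|z^+\|_1 + \|z^-\|_1 = \ve$, so at least one of $z^+, z^-$ has $\ell_1$ mass $\geq \ve/2$; WLOG take $x = z^+$ with $\|x\|_1 \geq \ve/2$. Partition the coordinates of $x$ into bins $\Bin_j(x)$ for $j \in [\log n]$ (coordinates of $x$ with value $< 1/n$ contribute total mass $< 1$, which for the purposes of the $\Omega$-statement I will absorb into constants, or equivalently restrict attention to $j \le \log n$ after noting the tail mass is negligible relative to $\ve$ --- this is the one place requiring a small amount of care, and I'd handle it by assuming WLOG that the mass in bins $1,\dots,\log n$ is $\geq \ve/4$). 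Exactly as in Lemma~\ref{lem:uniform-key}, the probability over the random choice of $j$ and $S$ that $S$ contains some $i$ with $x(i) \geq 1/r = 1/2^j$ is
$$ \frac{1}{\log n} \sum_{j \in [\log n]} \left(1 - \left(1 - \tfrac{1}{2^j}\right)^{|\Bin_j(x)|}\right) \geq \frac{1}{\log n} \sum_{j \in [\log n]} \frac14 \sum_{i \in \Bin_j(x)} x(i) \geq \frac{\ve/4}{4\log n} = \Omega\!\left(\frac{1}{\log n}\right), $$
using $1 - (1-1/2^j)^{m} \geq 1 - e^{-m/2^j}$, then Proposition~\ref{prop:bin-size} giving $|\Bin_j(x)|/2^j \geq \tfrac12\sum_{i\in\Bin_j(x)}x(i)$, then $1 - e^{-y} \geq y/2$ for $y \in [0,1]$. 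Hence there is a specific $j^*$ contributing, i.e. conditioned on $j = j^*$ the probability that $S$ hits a coordinate $i$ with $|z(i)| \geq 1/r$ is $\Omega(1)$ (more precisely $\geq \tfrac14 \sum_{i \in \Bin_{j^*}(x)} x(i)$, and by averaging some such $j^*$ has this sum bounded below by a constant over $\log n$; to get a genuine constant I instead observe that summing over all $j$ the contribution is $\Omega(1)$, so $\Pr[\exists i \in S: |z(i)| \ge 1/r \mid j = j^*]$ for a \emph{random} $j^*$ is $\Omega(1/\log n)$, which is exactly what I want for the final bound, so I do not even need a per-$j$ constant).

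\textbf{Step 2 (size of $S$).} For any fixed $j$ with $r = 2^j$, $|S|$ is a sum of $n$ independent $\mathrm{Bernoulli}(1/r)$ variables with mean $n/r$. For $r \le n/2$ the mean is $\ge 2$, and a two-sided Chernoff bound gives $\Pr[\tfrac12 \cdot \tfrac{n}{r} \le |S| \le \tfrac32 \cdot \tfrac{n}{r}] \ge 1 - 2e^{-c n/r} \geq$ a constant bounded away from $0$; for the few largest values of $r$ (where $n/r = O(1)$) the event $|S| = \Theta(n/r)$ still holds with constant probability since e.g. $\Pr[1 \le |S| \le O(1)]$ is a constant. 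Either way, $\Pr[|S| = \Theta(n/r) \mid j] = \Omega(1)$ uniformly in $j$. \textbf{Step 3 (combine).} Conditioning on the randomness of $j$, the event $|S| = \Theta(n/r)$ has probability $\Omega(1)$ and is, I claim, still compatible with the heavy-coordinate event with only a constant-factor loss: indeed the heavy-coordinate event, for fixed $j$, has probability $\Omega(1)$ \emph{in those $j$ that are good}, and intersecting two events each of constant probability need not have constant probability in general --- so the cleanest route is to note the heavy-coordinate event and the size event are \emph{positively correlated} or simply to re-run the Step 1 computation conditioned additionally on $|S| = \Theta(n/r)$: since conditioning on the total count being in a $\Theta(1)$-factor window changes the inclusion probability of any fixed coordinate set by at most a constant factor (by a standard negative-association / FKG argument for Bernoulli product measure conditioned on its sum), the lower bound in Step 1 survives up to constants. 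Putting it together: $\Pr[\text{both events}] \ge \Pr[\,|S| = \Theta(n/r)\,]\cdot\Pr[\exists i \in S:|z(i)|\ge 1/r \mid |S|=\Theta(n/r)] = \Omega(1/\log n)$.

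\textbf{The main obstacle} I anticipate is Step 3 --- rigorously arguing that conditioning on $|S| = \Theta(n/r)$ does not destroy the heavy-coordinate probability. The honest fix is the negative-association property of the product Bernoulli measure (the indicators $\{i \in S\}$ are negatively associated, and conditioning on $|S|$ lying in an interval of constant multiplicative width perturbs marginal/joint inclusion probabilities of $o(n)$-size sets by only a constant); alternatively, one sidesteps it entirely by proving the bound for a slightly different but equivalent sampling process (e.g. first fix $|S| = \lceil n/r\rceil$ exactly and choose $S$ uniformly of that size), for which $|S| = \Theta(n/r)$ holds deterministically and Step 1's computation goes through with hypergeometric-style inclusion probabilities $1 - \binom{n - |\Bin_j(x)|}{|S|}/\binom{n}{|S|} = \Omega(|\Bin_j(x)| \cdot |S|/n)$, which is $\Omega(\sum_{i\in\Bin_j(x)} x(i))$ by the same bin-size proposition. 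I would present the proof using whichever of these is cleaner, most likely the fixed-size reformulation, remarking that it is equivalent to the stated process up to constants.
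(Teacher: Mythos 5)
Your proposal captures the right high-level strategy — a bin decomposition à la Lemma~\ref{lem:uniform-key} for the heavy-coordinate event, plus concentration for $|S|$ — but there is a genuine gap in Step~1: the assumption that ``the mass in bins $1,\dots,\log n$ is $\geq \ve/4$'' is \emph{not} without loss of generality, and the regime where it fails is exactly the one the paper isolates as a separate case. A constant fraction of the $\ell_1$ mass of $z^+$ (or $z^-$) can sit on $\Theta(n)$ coordinates whose values lie just below $1/n$; those coordinates live in bins $j$ with $\log n < j \le \log(\Theta(n))$, which are out of reach of the sampling process (one would need $r = 2^j > n$, but $j$ ranges only over $[\log n]$). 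The Lemma~\ref{lem:uniform-key} computation, which requires $|z(i)| \ge 1/r$ with $r = 2^j \le n$, therefore gives you nothing in this regime, and your lower bound on the heavy-coordinate probability can collapse. The paper's own proof splits into two cases mirroring Sections~\ref{sec:many-small} and~\ref{sec:few-small}: when a constant fraction of the mass sits in bins near $\log n$, one deterministically targets $r = n$, for which $|S|$ is a small constant with constant probability, and a counting argument in the spirit of (\ref{eq:many-small}) shows there are $\Omega(n)$ coordinates with $|z(i)| = \Omega(1/r)$, so a constant-size $S$ hits one with constant probability. Your proposal needs this case; without it the argument is incomplete.

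On Step~3, your concern about decoupling the size event from the heavy-coordinate event is legitimate, but the negative-association / fixed-size detour is heavier than what the paper uses. The paper's route (implicit in the invocation of Lemma~\ref{lem:rest-signal}) is to fix a candidate heavy coordinate $i$ and argue about $S \setminus \{i\}$: since the indicator of $i \in S$ is independent of all the other inclusion indicators, the event $\{i \in S\}$ is \emph{exactly} independent of any event concerning $|S \setminus \{i\}|$, and $|S| = \Theta(n/r)$ follows from $|S \setminus \{i\}| = \Theta(n/r)$. This avoids conditioning on $|S|$ altogether and gives the multiplication of probabilities you want in Step~3 with no correlation argument. Your fixed-size reformulation would also work, but you would still owe a short argument that it is equivalent to i.i.d.\ Bernoulli sampling up to constants, and in any case it is orthogonal to the missing case above.
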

\begin{proof}
This is very similar to the proof of Lemma~\ref{lem:discrepant-set}, so we only sketch the proof.
We will use the same $\Bin_j$ notation as introduced in Definition~\ref{def:bin}.

First, consider the case where $\sum_{j=\log(n/32)+1}{\log 5n} \sum_{i \in \Bin_j(|z|)} |z(i)| \geq 1/5$.
This case is similar to that in Section~\ref{sec:many-small}.
With $\Omega\left(\frac{1}{\log n}\right)$ probability, the sampling procedure will select $r = \log n$ and a set $S$ of size exactly $1$, so we condition on this event.
By a similar calculation as (\ref{eq:many-small}), in this case there are $\Omega(n)$ indices such that $|z(i)| = \Omega(1/r)$, and therefore there is an $\Omega(1)$ probability of selecting one.

Next, we consider the case where $\sum_{j=1}^{\log(n/32)} \sum_{i \in \Bin_j(|z|)} |z(i)| \geq 3/5$, which corresponds to Section~\ref{sec:few-small}.
By an analogue of Lemma~\ref{lem:rest-signal}, we have that $|S| = \Theta\left(\frac{n}{r}\right)$, satisfying the first condition.
Finally, an analogue of Lemma~\ref{lem:uniform-key}, we have that there exists an $i \in S$ which satisfies the second condition.
\end{proof}

\end{document}